\documentclass[letterpaper, 10 pt, conference]{ieeeconf}  

\usepackage{color,array}

\usepackage{amssymb,amsfonts}
\usepackage{algorithm}
\usepackage{algpseudocode}
\usepackage{float}
\usepackage{subcaption}

\usepackage{graphicx}
\usepackage{verbatim}   
\usepackage{booktabs}
\usepackage{algorithm}
\usepackage{algpseudocode}
\usepackage{subcaption}
\usepackage{graphicx}

\usepackage{mathtools}
\usepackage{siunitx}
\usepackage{tabularx}
\usepackage{pgfplots}
\usepackage{pgfplotstable}

\usepackage{multirow}
\usepgfplotslibrary{fillbetween}
\usepackage{booktabs}
\pgfplotsset{compat = newest}
\usetikzlibrary{arrows.meta,positioning,shapes,intersections,patterns,calc,fit,decorations,decorations.markings}

\usepackage{cite}

\newtheorem{rem}{Remark}
\newtheorem{thm}{Theorem}

\newtheorem{cor}{Corollary}
\newtheorem{prop}{Proposition}

\newtheorem{assum}{Assumption}

\newcommand\tran{\mkern-2mu\raise1.25ex\hbox{$\scriptscriptstyle\top\hspace{0.5mm}$}\mkern-3.5mu}
\newcommand{\R}{\mathbb{R}}

\newcommand{\N}{\mathbb{N}}

\newcommand{\bm}[1]{{\boldsymbol{#1}}}

\DeclareMathOperator{\Prob}{P}

\newcommand{\x}{\bm x}

\newcommand{\f}{\bm{f}}

\newcommand{\hff}{\bm{h}_{\textnormal{ff}}}
\newcommand{\hfb}{\bm{h}_{\textnormal{fb}}}

\newcommand{\y}{\bm{y}}

\usepackage[noabbrev]{cleveref} 
\crefname{rem}{Remark}{Remarks}
\crefname{exam}{Example}{Examples}
\crefname{assum}{Assumption}{Assumptions}
\crefname{prop}{Proposition}{Propositions}
\crefname{propy}{Property}{Properties}
\crefname{cor}{Corollary}{Corollaries}
\crefname{lem}{Lemma}{Lemmas}
\crefname{section}{Section}{Sections}
\crefname{thm}{Theorem}{Theorems}
\crefname{alg}{Algorithm}{Algorithms}
\crefname{defn}{Definition}{Definitions}
\crefname{figure}{Fig.}{Fig.}
\Crefname{figure}{Figure}{Figures}
\crefname{equation}{}{}

\IEEEoverridecommandlockouts                              

\title{ \bf
Aggressiveness-Aware Learning-based Control of Quadrotor UAVs with Safety Guarantees
}

\author{Leonardo Colombo, Thomas Beckers, Juan Giribet
\thanks{L. Colombo is with Centre for Automation and Robotics (CSIC-UPM), Ctra. M300 Campo Real, Km 0,200, Arganda
del Rey - 28500 Madrid, Spain.{\tt\small leonardo.colombo@csic.es}} \thanks{Thomas Beckers is with the Department of Computer Science, Vanderbilt University, Nashville, TN 37212, USA {\tt\small thomas.beckers@vanderbilt.edu}} \thanks{Juan I. Giribet is with Universidad de San Andr\'es (UdeSA) and CONICET, Argentina.
        {\tt\footnotesize jgiribet@conicet.gov.ar}} \thanks{ L. Colombo acknowledge financial support from Grant PID2022-137909NB-C21 funded by MCIN/AEI/ 10.13039/501100011033. The research leading to these results was supported in part by iRoboCity2030-CM, Robótica Inteligente para Ciudades Sostenibles (TEC-2024/TEC-62), funded by the Programas de Actividades I+D en Tecnologías en la Comunidad de Madrid. J. Giribet was supported by PICT-2019-2371 and PICT-2019-0373 projects from Agencia Nacional de Investigaciones Cient\'ificas y Tecnol\'ogicas, and UBACyT-0421BA project from the Universidad de Buenos Aires (UBA), Argentina.}
}

\begin{document}

\maketitle
\thispagestyle{empty}
\pagestyle{empty}

\begin{abstract}
This paper presents an aggressiveness-aware control framework for quadrotor UAVs that integrates learning-based oracles to mitigate the effects of unknown disturbances. Starting from a nominal tracking controller on $\mathrm{SE}(3)$, unmodeled generalized forces and moments are estimated using a learning-based oracle and compensated in the control inputs. An aggressiveness-aware gain scheduling mechanism adapts the feedback gains based on probabilistic model-error bounds, enabling reduced feedback-induced aggressiveness while guaranteeing a prescribed practical exponential tracking performance. The proposed approach makes explicit the trade-off between model accuracy, robustness, and control aggressiveness, and provides a principled way to exploit learning for safer and less aggressive quadrotor maneuvers.
\end{abstract}

\section{Introduction}

Quadrotor UAVs are routinely required to track aggressive trajectories in cluttered and dynamic environments, where fast response and precise motion are key to safe operation.
For this purpose, geometric control on $SE(3)$ has emerged as a principled approach that avoids singularities and yields Lyapunov-based stability guarantees  \cite{lee2010geometric}.
Aggressive flight and trajectory-generation pipelines further highlight the need for high-bandwidth tracking in practice \cite{MellingerKumar2011MinSnap,MellingerKumar2012IJRR, lu2020robust, rodriguez2022autonomous, falanga2017aggressive, lopez2017aggressive}.

In real flight, quadrotors are affected by complex aerodynamic phenomena (drag, blade flapping, ground effect), rotor and frame asymmetries, and exogenous disturbances such as wind gusts.
Such effects can be naturally represented as unknown generalized forces and moments acting on the rigid-body dynamics.
When these unknown terms are not accounted for, nominal geometric controllers may exhibit degraded tracking and steady-state errors.
A widely used workaround is to increase feedback gains to recover performance.
While effective from a robustness standpoint, high-gain feedback typically increases \emph{feedback-induced aggressiveness}, manifesting as large variations and high rates in thrust and body torques, increased energy consumption, actuator saturation, and potentially unsafe behavior in proximity to humans or obstacles \cite{RashadBicegoZultSanchezEscalonillaJiaoFranchiStramigioli2022EnergyAwareImpedance, 8291488, BreedenGargPanagou2022CBFSampledData}.
This motivates making explicit the trade-off between tracking performance and feedback-induced aggressiveness, and developing design tools that achieve a prescribed tracking tolerance with the least aggressive feedback compatible with uncertainty.

This paper proposes to reduce the reliance on high feedback gains by augmenting geometric tracking control with a learned-based model of the unknown generalized disturbance forces and moments.
The key idea is simple: better disturbance compensation reduces model mismatch, which in turn allows smaller feedback gains while maintaining the same tracking specification.
To quantify this effect, we introduce a feedback-induced aggressiveness measure based on the sensitivity of the commanded generalized forces with respect to tracking-error variations.
Inspired by stiffness-alteration viewpoints in feedback design \cite{DellaSantina2017SoftRobots, beckers2022learning}, we formalize aggressiveness as a local input-sensitivity metric of the feedback term.
We then cast gain selection as an optimization problem: minimize feedback-induced aggressiveness subject to a practical exponential tracking bound with a prescribed ultimate error level.
To enable implementable scheduling, we leverage high-probability model-error bounds provided by Gaussian-process (GP) oracles \cite{Beckers2019StableGP, 11312152, Berkenkamp2015ECC,Berkenkamp2015SafeOpt}. GP oracles for quadrotors UAVs have been studied in \cite{nieto2026dual, nieto2024safe, 11320442, beckers2022safe, beckers2021online, colombo2023learning}.

Contribution: We formulate an \emph{aggressiveness-aware} performance objective for geometric quadrotor tracking control, capturing gain-induced sensitivity of thrust and torques via a local input-sensitivity metric. A learning-augmented controller is proposed that combines nominal geometric control with  learning-based models of unknown generalized forces and moments, and characterize how residual model error affects the feedback gains required to meet a prescribed exponential tracking specification.
We pose gain tuning as minimizing feedback-induced aggressiveness subject to a desired tracking-error bound, yielding a principled target for safe and accurate tracking in the presence of unmodeled dynamics, and enabling an aggressiveness-aware gain scheduling mechanism based on probabilistic error bounds.

The remainder of the paper is structured as follows.
Section~\ref{sec:def} reviews the quadrotor model, the nominal geometric tracking controller, and introduces the aggressiveness measure.
Section~\ref{sec:framework} presents the learning-augmented controller and the aggressiveness-aware stability and scheduling results.
Section~\ref{sec:sim} concludes the paper with numerical results, discussions, and highlights future research.

\section{System Modeling and Nominal Control}\label{sec:def}
In this section, we introduce the modeling and control of quadrotor UAVs, followed by the problem setting. In all the paper the mapping $(\cdot)^\wedge:\R^3\rightarrow\mathfrak{so}(3)$ and its inverse $(\cdot)^\vee$ denote the standard hat and vee isomorphisms between vectors in $\R^3$ and elements of the Lie algebra $\mathfrak{so}(3)$ of $3\times 3$ skew symmetric matrices.
\subsection{System class}\label{sec:sc}
We consider the  model of a quadrotor UAV evolving on the special Euclidean group $SE(3)$ as in~\cite{lee2010geometric}.
The translational and rotational dynamics of the quadrotor are given by
\begin{align}\label{for:se3system}
\begin{split}
    \dot{\bm{p}} &= \bm{v},\quad \dot{R} = R\hat{\bm{\omega}}\\
    m\dot{\bm{v}} &= m g \bm{e}_3 - T R\bm{e}_3 + \f_{\text{trans}}(\x),\\
        J\dot{\bm{\omega}} + \bm{\omega}\times J\bm{\omega} &= \bm{\tau}_b + \f_{\text{rot}}(\x),
\end{split}
\end{align}
where $\bm{p},\bm{v}\in\R^3$ denote the position and velocity of the center of mass in the inertial frame, $R\in SO(3)$ is the rotation matrix from body to inertial frame, $\bm{\omega}\in\R^3$ is the angular velocity expressed in the body frame, $m>0$ is the mass, $J\in\R^{3\times 3}$ is the positive definite inertia matrix, $g>0$ is the gravitational constant, and $\bm{e}_3 = [0,0,1]^\top$. The control input is
$\displaystyle{\bm{u} = \begin{bmatrix} T \\ \bm{\tau}_b \end{bmatrix} \in \mathcal{U} \subset \R^4}$, where $T\in\R$ denotes the total thrust magnitude and $\bm{\tau}_b\in\R^3$ are the control torques expressed in the body frame. The admissible input set $\mathcal{U}$ collects physical constraints such as thrust and torque saturation limits. The full system state is
\begin{align}
    x \coloneqq (\bm{p}, \bm{v}, R, \bm{\omega}) \in \mathcal{X} := \R^3 \times \R^3 \times SO(3) \times \R^3,
\end{align}
where $(\bm p,R)\in SE(3)$ are the configuration variables (position and attitude), and $(\bm v,\bm\omega)\in\R^3\times\R^3$ are the corresponding linear and angular velocities. During the analysis we assume that the commanded inputs remain within the admissible set $\mathcal U$ on the compact set $\mathcal X_c$.

The terms $\f_{\text{trans}}:\mathcal{X}\to\R^3$ and $\f_{\text{rot}}:\mathcal{X}\to\R^3$ represent unknown disturbances and unmodeled dynamics acting on the translational and rotational subsystems, respectively (e.g., aerodynamic drag, wind gusts, and rotor asymmetries). We also define the stacked generalized disturbance forces
\begin{align}
    \f(\x) \coloneqq 
    \begin{bmatrix}
        \f_{\text{trans}}(\x)\\
        \f_{\text{rot}}(\x)
    \end{bmatrix}
    \in \R^6.
\end{align}

Using local coordinates induced by the Lie algebra $\mathfrak{so}(3)$ via the hat and vee isomorphisms, the system dynamics~\eqref{for:se3system} can be represented in local Euclidean coordinates as a control-affine system of dimension $12$,
\begin{align}\label{for:se3affine}
    \dot{\x} = \f_{\text{dyn}}(\x) + G_{\text{dyn}}(\x)\bm{u} + \bar{\f}(\x),
\end{align}
where $\x\in\R^{12}$ denotes the vector of local coordinates associated with $(\bm{p},\bm{v},R,\bm{\omega}) \in \mathcal X$.
Here, $\f_{\text{dyn}}:\R^{12}\to\R^{12}$ collects the known part of the quadrotor dynamics, $G_{\text{dyn}}:\R^{12}\to\R^{12\times 4}$ is the known input matrix. The term $\bar{\f}(\x)\in\R^{12}$ denotes the lifting of the generalized disturbance forces $\f(\x)$ into the state-space coordinates, i.e., the contribution of $\f_{\text{trans}}$ and $\f_{\text{rot}}$ to the time derivatives of $(\bm{v},\bm{\omega})$ in the local Euclidean representation. We consider matched additive disturbances in the generalized coordinates, i.e., $\bar{\f}(\x)$ affects only the acceleration components in the local state representation. The input matrix $G_{\text{dyn}}(\x)$ has constant rank $4$ for all $\x\in\mathcal X$, reflecting the underactuated nature of the quadrotor system. 

We are interested in tracking a sufficiently smooth reference trajectory evolving on $\mathcal{X}$,
\begin{align}
    (\bm{p}_d(t), \dot{\bm{p}}_d(t), R_d(t), \bm{\omega}_d(t)) \in \mathcal X,
\end{align}
where $\bm{p}_d(t)\in\R^3$ denotes the desired position, $R_d(t)\in SO(3)$ the desired attitude, and $\dot{\bm{p}}_d(t)$ and $\bm{\omega}_d(t)$ the corresponding desired linear and angular velocities.

The associated tracking errors are defined as
\begin{align}
    \bm{e}_p &= \bm{p}-\bm{p}_d, \quad 
    \bm{e}_v = \bm{v}-\dot{\bm{p}}_d, \\
    \bm{e}_R &= \tfrac{1}{2}\big(R_d^\top R - R^\top R_d\big)^\vee, \quad
    \bm{e}_\omega = \bm{\omega} - R^\top R_d\,\bm{\omega}_d,
\end{align}
and collected in the stacked error vector
\begin{align}
    \bm{e}(t) =
    \begin{bmatrix}
        \bm{e}_p^\top &
        \bm{e}_v^\top &
        \bm{e}_R^\top &
        \bm{e}_\omega^\top
    \end{bmatrix}^\top
    \in \R^{12}.\label{e}
\end{align}

\begin{rem}
The nominal geometric tracking controller on $SE(3)$ renders the tracking error almost-globally exponentially stable, except for a measure-zero set due to the topology of $SO(3)$~\cite{lee2010geometric}. 
In this work, we perform the analysis in local coordinates induced by the Lie algebra $\mathfrak{so}(3)$, which is sufficient for the study of learning-based disturbance compensation and feedback-induced aggressiveness. All results therefore hold locally around the desired trajectory, while inheriting the almost-global properties of the underlying geometric controller.\hfill$\diamond$
\end{rem}

We assume the existence of a \emph{nominal} geometric tracking controller of the form
\begin{align}\label{for:se3basicctrl}
    \bm{u} = \hff(\x,\x_d) + \hfb(\x)H\bm{e},
\end{align}
where $\hff:\mathcal X\times\mathcal X\to\R^4$ is a feedforward term derived from the known dynamics $\f_{\text{dyn}}$ and $G_{\text{dyn}}$ (e.g., as in geometric tracking control on $SE(3)$ given in ~\cite{lee2010geometric}), $\hfb:\mathcal X\to\R^{4\times 12}$ is a state-dependent feedback mapping, and $H\in\mathcal H\subseteq\R^{12\times 12}$ is a matrix of tunable feedback gains, typically chosen to be diagonal or block-diagonal and collecting the proportional and derivative gains associated with translational and rotational error components.

The nominal control law~\eqref{for:se3basicctrl} is designed under the assumption $\bar{\f}\equiv 0$, i.e., neglecting the unknown disturbances and unmodeled dynamics. 
Under this assumption, the resulting \emph{nominal} closed-loop state dynamics is
\begin{align}\label{for:se3nomcl}
    \dot{\x} = \f_{\text{dyn}}(\x) + G_{\text{dyn}}(\x)\big[\hff(\x,\x_d)+\hfb(\x)H\bm{e}\big].
\end{align}
When expressed in terms of the tracking error variables $\bm e=(\bm e_p,\bm e_v,\bm e_R,\bm e_\omega)$, this dynamics induce a closed-loop error system of the form
\begin{align}
    \dot{\bm{e}} = \bar{\f}_{\text{cl}}(\bm{e}, \x_d, H),
\end{align}
where $\bar{\f}_{\text{cl}}:\R^{12}\times\mathcal X\times\mathcal H\to\R^{12}$ denotes the \emph{nominal closed-loop error vector field}, obtained by substituting~\eqref{for:se3nomcl} into the time derivative of the error coordinates and expressing the result in local coordinates induced by the Lie algebra $\mathfrak{so}(3)$. We restrict attention to a local chart induced by $\mathfrak{so}(3)$ along the reference such that the error map $\Psi(\x,\x_d)$ is smooth and its Jacobian w.r.t. $\x$ is bounded on the compact set $\mathcal X_c$ containing the closed-loop trajectories.

\begin{assum}\label{ass:nominal_lyap}
For each $H\in\mathcal H$, the nominal closed-loop error dynamics (i.e., $\bar{\f}\equiv 0$) admit a continuously differentiable Lyapunov function
$V(\bm e)=\bm e^\top P(H)\bm e$ with $P(H)=P(H)^\top\succ 0$ and constants $\underline\lambda,\overline\lambda,c_1>0$ such that, for all $\bm e$ in a neighborhood $\mathcal E\subset\R^{12}$,
\begin{align}\label{eq:nom_lyap_bounds}
\underline\lambda \|\bm e\|^2 \le V(\bm e)\le \overline\lambda \|\bm e\|^2,
\qquad
\dot V(\bm e)\le -c_1 \|\bm e\|^2 .
\end{align}
\end{assum}

\begin{rem}
Assumption~\ref{ass:nominal_lyap} is satisfied whenever a geometric tracking controller on $SE(3)$ is available for the known quadrotor dynamics that renders the tracking error locally exponentially stable (see~\cite{lee2010geometric} for controller construction).\hfill$\diamond$\end{rem}

\begin{assum}\label{ass:gradV}
There exists a constant $c_2>0$ such that
\begin{align}\label{eq:gradV_bound}
\bigl\|\nabla_{\bm e} V(\bm e)\bigr\|\le c_2 \|\bm e\|,
\qquad \forall \bm e\in\mathcal E.
\end{align}
\end{assum}

\begin{rem}
For the quadratic Lyapunov function $V(\bm e)=\bm e^\top P(H)\bm e$, a suitable choice is $c_2=2\|P(H)\|$.\hfill$\diamond$
\end{rem}

\begin{rem}
In practice, the ``known'' dynamics $\f_{\text{dyn}}$ can be obtained from a simplified rigid-body model of the quadrotor with nominal parameters, while complex aerodynamic effects, unmodeled dynamics, and parametric uncertainties are absorbed into the disturbance term $\bar{\f}$. This separation simplifies the design of the nominal controller~\eqref{for:se3basicctrl}.\hfill$\diamond$
\end{rem}

\begin{rem}
Although the quadrotor is underactuated since $G_{\text{dyn}}(\x)\in\mathbb{R}^{12\times 4}$ has rank $4<12$, the formulation on $SE(3)$ together with a suitable choice of error coordinates $(\bm{e}_p,\bm{e}_v,\bm{e}_R,\bm{e}_\omega)$ allows the design of nominal tracking controllers that satisfy Assumption~\ref{ass:nominal_lyap} directly on the configuration manifold. In particular, stability can be established without introducing additional virtual states or artificial control inputs, as the coupling between translational and rotational dynamics is handled intrinsically through the geometry of $SE(3)$ and the attitude-dependent thrust direction.\hfill$\diamond$
\end{rem}

The application of the nominal control law~\eqref{for:se3basicctrl} to the \emph{actual} dynamics~\eqref{for:se3affine} generally leads to a mismatch between the theoretical performance guarantees and the observed behavior, due to the presence of the unknown disturbance term $\bar{\f}$. A naive remedy is to increase the feedback gains collected in $H$ to reduce the tracking error. However, high feedback gains are often undesirable, as they result in aggressive attitude maneuvers, large variations in thrust and body-frame torques, increased actuator stress, and potentially unsafe behavior.

To quantify this effect, we introduce a measure of \emph{feedback-induced aggressiveness} based on the sensitivity of the commanded control inputs with respect to variations in the tracking error. Since the generalized forces acting on the quadrotor are directly given by the control input $\bm{u}=(T,\bm{\tau}_b)\in\mathcal U$, we focus on the feedback component of the control law~\eqref{for:se3basicctrl}. Inspired by the stiffness-alteration viewpoint in soft robotics~\cite{DellaSantina2017SoftRobots}, we define the feedback-induced aggressiveness as the mapping $s:\mathcal H\times\mathcal X \to \R{\ge 0}$
\begin{align}
    s(H,\x) \coloneqq
    \left\|
        \frac{\partial}{\partial \bm{e}}
        \big( \hfb(\x) H \bm{e} \big)
    \right\|=\left\|\hfb(\x) H \right\|,
    \label{for:se3feedbackstiff}
\end{align}
where the derivative is taken with respect to the tracking error $\bm{e}\in\R^{12}$, treating the state $\x$ as a parameter, and the norm denotes an induced matrix norm. The quantity $s(H,\x)$ captures how strongly the commanded thrust and torques react to small deviations in the tracking error. Large values of $s(H,\x)$ correspond to aggressive feedback behavior, which can be problematic in proximity to obstacles or humans and may also excite unmodeled dynamics.

\subsection{Motivating Example}\label{sec:se3mot}

To illustrate the trade-off between feedback gains, tracking performance, and feedback-induced aggressiveness in the quadrotor setting, we consider a simplified model of vertical motion around hover. 
This model is obtained from~\eqref{for:se3system} by assuming small attitude deviations and focusing on the altitude dynamics. 
Let $x_1$ denote the altitude deviation from a desired constant height, and $x_2$ the corresponding vertical velocity. 
The dynamics are given by
\begin{align}
   \dot{x}_1 &= x_2,\\
   \dot{x}_2 &= \underbrace{g - \tfrac{1}{m}T}_{\f_{\text{dyn}}(x)} + \underbrace{f(x)}_{\text{unknown}},
\end{align}
where $m>0$ denotes the quadrotor mass, $T$ is the total thrust command, and $f(x)$ collects unknown effects such as ground effect, unmodeled aerodynamic drag, and external disturbances. 
The control objective is to regulate the system to the equilibrium $x_d=\bm{0}$.

A nominal controller for the vertical channel can be obtained by treating the thrust $T$ as a virtual input and canceling the known gravity term,
\begin{align}
    T = m\big(g + H\bm e\big)
    = m\big(g + h_2 x_2 + h_1 x_1\big),\label{for:se3ex:ctrl}
\end{align} where $\bm{e} = [x_1,x_2]^\top$ and $H=[h_1,h_2]\in\mathcal{H}=\R_{\geq 0}^{1\times 2}$ collects the feedback gains. 
The resulting closed-loop dynamics is
\begin{align}
    \dot{x}_2 = -h_2 x_2 - h_1 x_1 + f(x).\label{for:se3clexamp}
\end{align}

From a purely nominal control perspective and neglecting the unknown term $f(x)$, increasing $h_1$ and $h_2$ improves the rate of convergence of the tracking error. 
In this case the feedback-induced aggressiveness becomes
\begin{align}
    s(H,\x)
    = \left\Vert \frac{\partial}{\partial \bm{e}} \big( m H \bm{e} \big) \right\Vert
    = m\|H\|_2
    = m\sqrt{h_1^2 + h_2^2}.
\end{align}

In the context of a quadrotor, increased aggressiveness manifests as large thrust variations around hover, leading to aggressive vertical accelerations, higher energy consumption, and potential actuator saturation.

The required feedback gains can be reduced by incorporating a learned model $\hat{f}$ of the unknown term $f$. 
The control law~\eqref{for:se3ex:ctrl} is then modified to
\begin{align}
    T = m\big(g + h_2 x_2 + h_1 x_1\big) - m\hat{f}(x),
\end{align}
which yields the closed-loop dynamics
\begin{align}
    \dot{x}_2 = -h_2 x_2 - h_1 x_1 + \big[f(x)-\hat{f}(x)\big].
\end{align}
In the ideal case of perfect model matching, i.e., $f=\hat{f}$, the equilibrium $x_d=\bm{0}$ is asymptotically stable for any choice of $h_1,h_2>0$. 
More generally, improved model accuracy allows one to reduce the feedback gains required to achieve a prescribed tracking performance, thereby lowering the feedback-induced aggressiveness and avoiding overly aggressive thrust commands. 
Conversely, larger model errors necessitate higher feedback gains, resulting in more aggressive and potentially less safe behavior.

\subsection{Problem Setting}\label{sec:se3ps}

The motivating example above illustrates the fundamental trade-off between feedback gains, tracking performance, and feedback-induced aggressiveness in a simplified vertical quadrotor model. An analogous trade-off arises for the full dynamics~\eqref{for:se3affine}: increasing the feedback gains collected in $H$ typically improves disturbance attenuation under the assumed continuous-time model and in the absence of input constraints, but simultaneously amplifies the sensitivity of the commanded thrust and body-frame torques with respect to tracking-error variations, as quantified by the  measure~\eqref{for:se3feedbackstiff}.

The objective of this article is to guarantee a prescribed tracking performance for the quadrotor while minimizing the feedback-induced aggressiveness associated with the control law~\eqref{for:se3basicctrl}. To this end, we consider the system~\eqref{for:se3affine}, where the unknown generalized disturbance forces are approximated by a learning-based oracle $\hat{\f}:\mathcal X \to \R^{6}$. The corresponding compensation is mapped into the physical input space via $K_{\text{dyn}}(\x)$ and applied through the input matrix $G_{\text{dyn}}(\x)$. We adopt the augmented control law
\begin{align}
    \bm{u} = \hff(\x,\x_d) - K_{\text{dyn}}(\x)\hat{\f}(\x) + \hfb(\x)H\bm{e},\label{for:se3augctrl}
\end{align}
where $K_{\text{dyn}}(\x)\in\R^{4\times 6}$ is a known disturbance-compensation map that distributes the learned generalized forces and moments into the physical input space. Typical choices for $K_{\text{dyn}}(\x)$ include pseudoinverse- or least-squares-based mappings associated with $G_{\text{dyn}}(\x)$; however, perfect cancellation of $\bar{\f}$ is not assumed. The functions $\hff$ and $\hfb$ are as defined in~\eqref{for:se3basicctrl}.

The resulting closed-loop tracking error dynamics depend on the (state-space) residual mismatch $\Delta(\x)=\bar{\f}(\x)-G_{\text{dyn}}(\x)K_{\text{dyn}}(\x)\hat{\f}(\x)$ and on the feedback gain matrix $H$.
We seek to select $H\in\mathcal H$ so that the tracking error remains uniformly bounded and converges \emph{practically exponentially} to a neighborhood of the origin for all initial conditions in a (local) region of attraction.  Specifically, for a prescribed ultimate tracking-error bound $\varepsilon\in\R_{>0}$, we require that for each feasible $H\in\mathcal H$ there exist constants $\gamma_1(H),\gamma_2(H)\in\R_{>0}$ such that the closed-loop tracking error satisfies: \begin{align}
\begin{split}\label{for:se3maxtrack}
        &\min_{H\in\mathcal{H}} \; \sup_{\x\in\mathcal X_c} s(H,\x) \\
        &\text{s.t. } 
        \Vert\bm{e}(t)\Vert \leq 
        \gamma_1 e^{-\gamma_2 (t-t_0)} \Vert \bm{e}(t_0)\Vert + \varepsilon,
        \quad \forall t \ge t_0,
\end{split}
\end{align}
where $\mathcal X_c\subset\mathcal X$ is a compact set containing the relevant closed-loop trajectories, $s(H,\x)$ is evaluated locally for $\bm e\approx 0$, and the norm denotes any (convex) matrix norm (e.g., induced 2-norm or Frobenius norm).

\begin{rem}
Since the feedback term is linear in $\bm e$, we have $s(H,\x)=\|\hfb(\x)H\|$, hence the objective $H\mapsto \sup_{\x\in\mathcal X_c} s(H,\x)$ is convex. If, in addition, the feasible set is nonempty and $\mathcal H$ is compact (e.g., by imposing $\|H\|\le \bar H$), then problem~\eqref{for:se3maxtrack} admits at least one minimizer.\hfill$\diamond$
\end{rem}

\section{Aggressiveness-Aware Learning Control}\label{sec:framework}
We address the problem in~\eqref{for:se3maxtrack} with a learning-augmented control framework that balances model-based disturbance compensation and feedback gains such that a prescribed tracking performance is achieved while reducing feedback-induced aggressiveness.

The controller implements the augmented control law
\begin{align}\label{for:framework:ctrl}
    \bm{u} = \hff(\x,\x_d) - K_{\text{dyn}}(\x)\hat{\f}_N(\x) + \hfb(\x)H_N\bm{e},
\end{align}
which augments the nominal geometric controller with a disturbance compensation term and an aggressiveness-aware feedback gain matrix $H_N$. 
Here, $\hat{\f}_N(\x)\in\R^6$ denotes the oracle prediction of the generalized disturbance forces and moments, and $K_{\text{dyn}}(\x)\in\R^{4\times 6}$ maps these estimates into the input space. 
When the oracle admits a uniform high-probability error bound on $\mathcal X_c$, smaller gains in $H_N$ can be sufficient to satisfy the prescribed practical tracking bound, thereby reducing the feedback-induced aggressiveness. Conversely, larger model errors require increased feedback gains to maintain robustness.

\subsection{Oracle and Model Error Bounds}\label{sec:oracle}
Recall that the unknown disturbances in~\eqref{for:se3system} are captured by the generalized disturbance forces and moments
\begin{align}
    \f(\x) =
    \begin{bmatrix}
        \f_{\text{trans}}(\x)\\
        \f_{\text{rot}}(\x)
    \end{bmatrix}
    \in\R^6.
\end{align}
We consider an oracle that predicts $\f(\x)$ for a given state $\x\in\mathcal X$. 
The oracle is trained from a dataset constructed from measured or estimated disturbance realizations obtained during closed-loop operation.

Specifically, we define a time-varying dataset
\begin{align}
    \mathcal{D}(t) = \{(\x^{\{i\}}, \y^{\{i\}})\}_{i=1}^{N(t)},\label{for:dataset}
\end{align}
where $N:\R_{\ge 0}\to\N$ denotes the (time-varying) number of data points and $\y^{\{i\}}\in\R^6$ is a data-driven estimate of $\f(\x^{\{i\}})$. 
In practice, $\y$ can be obtained by subtracting the known model contributions from measured accelerations and applied inputs in~\eqref{for:se3system}.

For notational simplicity, we consider piecewise-constant datasets over intervals $t\in[t_N,t_{N+1})$ with $0=t_0<t_1<t_2<\cdots$ and denote $\mathcal D_N:=\mathcal D(t)$ on $[t_N,t_{N+1})$. 
The corresponding oracle at time $t_N$ is denoted by
\begin{align}
    \hat{\f}_N:\mathcal X \to \R^6,
    \qquad
    \hat{\f}_N(\x)=\hat{\y}\,\vert\,\x,\mathcal D_N.\label{for:oraclepred}
\end{align}
This formulation accommodates purely online learning, offline training with fixed datasets, as well as hybrid schemes. Between update times $t_N$, the oracle and gains are held constant; the closed-loop system is thus piecewise-smooth with switching at $\{t_N\}$.

To quantify oracle accuracy, we assume a high-probability bound on the model error over a compact set. 
Let $\mathcal X_{\mathrm{c}}\subset\mathcal X$ be a compact set containing the relevant closed-loop trajectories.
\begin{assum}\label{ass:errorbound}
Consider an oracle with predictions $\hat{\f}_N\in\mathcal{C}^0$ based on  $\mathcal D_N$. 
There exists a finite function $\bar{\rho}_N:\mathcal X_{\mathrm{c}}\times(0,1]\to\R_{\ge 0}$ such that, for any confidence level $\delta\in(0,1]$,
\begin{align}\label{for:moderror}
    \Prob\!\left\{\ \|\f(\x)-\hat{\f}_N(\x)\|\le \bar{\rho}_N(\x,\delta)\ \ \forall \x\in\mathcal X_{\mathrm{c}}\ \right\}\ge \delta,
\end{align}
for any fixed $N\in\{0,1,\ldots,N_{\text{end}}\}$.
\end{assum}

\begin{assum}\label{ass:switching}
The number of dataset updates is finite: there exist $N_{\text{end}}\in\N$ and $T_{\text{end}}\in\R_{\ge 0}$ such that $\mathcal D(t)=\mathcal D_{N_{\text{end}}}$ for all $t\ge T_{\text{end}}$.
\end{assum}

\begin{rem}The probability is taken w.r.t  the measurement noise and the resulting oracle posterior (and, if applicable, the prior over $f$). Assumption~\ref{ass:errorbound} ensures bounded model error in probability on $\mathcal X_{\mathrm{c}}$, while Assumption~\ref{ass:switching} reflects practical limitations in memory and computational resources.\hfill$\diamond$\end{rem}

\subsection{Gaussian Process as Oracle}\label{sec:gp}
Gaussian process (GP) models provide a principled oracle for nonlinear regression with uncertainty quantification. 
We adopt an independent-output GP construction to predict each component of $\f(\x)\in\R^6$. Let $\mathcal D_N$ contain $N_{\mathcal D}$ training pairs and define the input matrix $X=[\x^{\{1\}},\x^{\{2\}},\ldots,\x^{\{N_{\mathcal D}\}}]\in\R^{12\times N_{\mathcal D}}$ and the output matrix $Y^\top=[\tilde{\y}^{\{1\}},\tilde{\y}^{\{2\}},\ldots,\tilde{\y}^{\{N_{\mathcal D}\}}]\in\R^{6\times N_{\mathcal D}}$, where $\tilde{\y}=\y+\bm\eta$ are noisy measurements corrupted by Gaussian noise $\bm\eta\sim\mathcal{N}(0,\sigma^2 I_6)$. 
For a query $\x^*$, the GP posterior mean and variance for each output component $i\in\{1,\ldots,6\}$ are
\begin{align}
	\mu_i(\x^*\mid \mathcal D_N) &= \bm{k}(\x^*,X)^\top K^{-1}Y_{:,i},\label{for:gppred}\\
	\sigma^2_i(\x^*\mid\mathcal D_N) &= k(\x^*,\x^*)-\bm{k}(\x^*,X)^\top K^{-1} \bm{k}(\x^*,X),\notag
\end{align}
where $k$ is a positive definite kernel, $\bm{k}(\x^*,X)=[k(\x^*,\x^{\{1\}}),\ldots,k(\x^*,\x^{\{N_{\mathcal D}\}})]^\top$, and $K\in\R^{N_{\mathcal D}\times N_{\mathcal D}}$ is the corresponding Gram matrix.

Stacking the component-wise posteriors yields a Gaussian distribution with mean $\bm\mu(\x^*\mid\mathcal D_N)\in\R^6$ and diagonal covariance $\Sigma(\x^*\mid\mathcal D_N)\in\R^{6\times 6}$. 
Under standard regularity assumptions on the unknown function, this representation enables explicit high-probability error bounds.

\begin{assum}\label{ass:rkhs}
The kernel $k$ is chosen such that each component $f_i$ of $\f$ has a finite reproducing kernel Hilbert space norm on $\mathcal X_{\mathrm{c}}$, i.e., $\|f_i\|_{k}\le B_i <\infty$ for all $i=1,\ldots,6$.
\end{assum}

Under Assumption~\ref{ass:rkhs} and standard GP concentration results, the prediction error admits a computable high-probability bound of the form~\eqref{for:moderror}, where $\bar{\rho}_N(\x,\delta)$ can be expressed in terms of the posterior variance and information-gain quantities (see, e.g., \cite{Srinivas2010GPUCB}).

\subsection{Aggressiveness-Aware Stability Guarantees}
\begin{thm}\label{thm:main}
Consider the system~\eqref{for:se3affine} together with the learning-based control law~\eqref{for:framework:ctrl}. 
Suppose that Assumptions~\ref{ass:nominal_lyap} and~\ref{ass:gradV} hold, and that the oracle satisfies Assumption~\ref{ass:errorbound} on a compact set $\mathcal X_c\subset\mathcal X$ containing the closed-loop trajectories.

Fix $\varepsilon>0$ and $\delta\in(0,1]$. 
Assume the oracle error bound satisfies
\begin{align}\label{eq:gain_condition}
    \sup_{\x\in\mathcal X_c}\bar\rho_N(\x,\delta)\ \le\ \frac{c_1}{2c_2}\,\varepsilon,
\end{align}
where $c_1,c_2$ correspond to the Lyapunov inequalities for the chosen $H_N$. 
Denote by $\mathcal E\subset\R^{12}$ a neighborhood of the origin in which the local error coordinates are valid and the nominal Lyapunov conditions hold. 
Then, with probability at least $\delta$, every solution with $\x(t)\in\mathcal X_c$ and $\bm e(t_0)\in\mathcal E$ satisfies the practical exponential tracking bound
\begin{align}\label{eq:pexp_bound}
    \|\bm e(t)\|\le \gamma_1 e^{-\gamma_2(t-t_0)}\|\bm e(t_0)\|+\varepsilon,\qquad \forall t\ge t_0,
\end{align}
for explicit constants $\gamma_1(H_N),\gamma_2(H_N)>0$.

Moreover, assume that the gain-selection rule produces $H_N$ such that, on $\mathcal X_c$,
\begin{align}\label{eq:HN_affine_bound}
    \|H_N\|\le \kappa_1 \sup_{\x\in\mathcal X_c}\bar\rho_N(\x,\delta)+\kappa_2
\end{align}
for some constants $\kappa_1,\kappa_2>0$ independent of $N$. 
Then the feedback-induced aggressiveness satisfies, for all $\x\in\mathcal X_c$,
\begin{align}\label{eq:aggr_bound_thm}
s(H_N,\x)=\|\hfb(\x)H_N\|
\le \alpha_1\,\bar\rho_N(\x,\delta)+\alpha_2,
\end{align}
for constants $\alpha_1,\alpha_2>0$ independent of $N$. 
In particular, smaller model-error bounds $\bar\rho_N$ enable lower feedback-induced aggressiveness while preserving~\eqref{eq:pexp_bound}.
\end{thm}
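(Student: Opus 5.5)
The proof is a Lyapunov perturbation argument with the nominal function $V(\bm e)=\bm e^\top P(H_N)\bm e$ from Assumption~\ref{ass:nominal_lyap}, conditioned on the high-probability event of Assumption~\ref{ass:errorbound}. Let
\[
\Omega_\delta:=\bigl\{\|\f(\x)-\hat{\f}_N(\x)\|\le\bar\rho_N(\x,\delta)\ \forall \x\in\mathcal X_c\bigr\}.
\]
By Assumption~\ref{ass:errorbound}, $\Prob(\Omega_\delta)\ge\delta$, and all estimates below are made on $\Omega_\delta$.

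Substituting \eqref{for:framework:ctrl} into \eqref{for:se3affine} gives the nominal closed loop plus the residual $\Delta(\x)=\bar{\f}(\x)-G_{\text{dyn}}K_{\text{dyn}}\hat{\f}_N(\x)$. Under the matched structure and the compensation map this residual is the lift of $\f-\hat{\f}_N$. Pushing it through the error map $\Psi$ gives
\[
\dot{\bm e}=\bar{\f}_{\text{cl}}(\bm e,\x_d,H_N)+\tilde\Delta,
\]
where $\|\tilde\Delta\|\le\bar\rho_N$. A bounded Jacobian factor can be absorbed into $c_2$ or into the normalization implicit in \eqref{eq:gain_condition}; I would state this normalization explicitly, and it is the step I consider most delicate. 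Assumption~\ref{ass:nominal_lyap} and Assumption~\ref{ass:gradV} then give
\[
\dot V\le -c_1\|\bm e\|^2+c_2\|\bm e\|\,\bar\rho,\qquad \bar\rho:=\sup_{\mathcal X_c}\bar\rho_N(\cdot,\delta).
\]
Writing $c_1\|\bm e\|^2=\tfrac{c_1}{2}\|\bm e\|^2+\tfrac{c_1}{2}\|\bm e\|^2$ yields $\dot V\le-\tfrac{c_1}{2}\|\bm e\|^2$ whenever $\|\bm e\|\ge 2c_2\bar\rho/c_1$. By \eqref{eq:gain_condition} this holds whenever $\|\bm e\|\ge\varepsilon$.

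The standard ultimate-boundedness argument follows. Outside the ball of radius $\varepsilon$, $\dot V\le-\tfrac{c_1}{2\overline\lambda}V$, so $V(t)\le e^{-\frac{c_1}{2\overline\lambda}(t-t_0)}V(t_0)$ until the error enters the sublevel set $\{V\le\overline\lambda\varepsilon^2\}$. That set is forward invariant, since $\dot V<0$ on its boundary portion outside the ball. Combining the two regimes with the sandwich bounds gives
\[
\|\bm e(t)\|\le\sqrt{\overline\lambda/\underline\lambda}\,e^{-\frac{c_1}{4\overline\lambda}(t-t_0)}\|\bm e(t_0)\|+\sqrt{\overline\lambda/\underline\lambda}\,\varepsilon.
\]
So I take $\gamma_1=\sqrt{\overline\lambda/\underline\lambda}$ and $\gamma_2=c_1/(4\overline\lambda)$. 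To obtain exactly $+\varepsilon$ rather than $\gamma_1\varepsilon$, I either tighten the margin (replacing $\varepsilon$ by $\varepsilon\sqrt{\underline\lambda/\overline\lambda}$ in the split), or I argue with the sum of the two terms, which is a cleaner bound via the comparison lemma on $W=\sqrt V$. The comparison-lemma route gives $\dot W\le-\frac{c_1}{2\overline\lambda}W+\frac{c_2}{2\sqrt{\underline\lambda}}\bar\rho$, hence directly
\[
\|\bm e\|\le\gamma_1e^{-\gamma_2(t-t_0)}\|\bm e(t_0)\|+\text{const}\cdot\bar\rho.
\]
The constant must be matched to \eqref{eq:gain_condition}, and I expect this bookkeeping of constants to be the main obstacle. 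I would also note that trajectories stay in $\mathcal E$ for initial errors inside an appropriate sublevel set; in practice I would shrink the admissible initial set accordingly. Finally, I would note that $H_N$ is constant on each interval, so the argument applies interval-wise. On the final interval this is guaranteed by Assumption~\ref{ass:switching}.

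The aggressiveness bound is direct. $\hfb$ is continuous on the compact set $\mathcal X_c$, so $M:=\sup_{\mathcal X_c}\|\hfb(\x)\|<\infty$. Submultiplicativity of the induced norm and \eqref{eq:HN_affine_bound} give
\[
s(H_N,\x)\le M\|H_N\|\le M\kappa_1\bar\rho+M\kappa_2.
\]
This yields \eqref{eq:aggr_bound_thm} with $\alpha_1=M\kappa_1$ and $\alpha_2=M\kappa_2$, which are independent of $N$, with the right-hand side read through the uniform bound $\bar\rho$. A pointwise reading would require the gain rule to be pointwise, and I would remark on this.
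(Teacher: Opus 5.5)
\textbf{The unresolved step.} The step you defer as constant bookkeeping is a genuine gap, and neither of your fixes closes it under \eqref{eq:gain_condition} as stated.

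Your split only gives $\dot V<0$ for $\|\bm e\|\ge\varepsilon$. The smallest sublevel set guaranteed to be invariant is therefore $\{V\le\overline\lambda\varepsilon^2\}$, which is where your $\gamma_1\varepsilon$ comes from. Your two fixes, and the best possible split, all fall short:
\begin{itemize}
\item Replacing $\varepsilon$ by $\varepsilon\sqrt{\underline\lambda/\overline\lambda}$ in the split requires $\sup_{\mathcal X_c}\bar\rho_N\le\frac{c_1}{2c_2}\sqrt{\underline\lambda/\overline\lambda}\,\varepsilon$. This is a strictly stronger hypothesis unless $\underline\lambda=\overline\lambda$.
\item Your comparison lemma on $W=\sqrt V$ leaves the residual $\frac{\overline\lambda c_2}{\underline\lambda c_1}\bar\rho\le\frac{\overline\lambda}{2\underline\lambda}\varepsilon$.
\item The best split is $\dot V\le-(1-\theta)c_1\|\bm e\|^2$ for $\|\bm e\|\ge\varepsilon/(2\theta)$ with $\theta\uparrow1$. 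It still only reaches $\tfrac12\sqrt{\overline\lambda/\underline\lambda}\,\varepsilon$, which is at most $\varepsilon$ only when $\overline\lambda<4\underline\lambda$.
\end{itemize}

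Sharper bookkeeping cannot help, because $+\varepsilon$ does not follow from the hypotheses the proof uses. Take $P=\mathrm{diag}(1,k)$ with $k>4$, let $J$ be the rotation by $\pi/2$, and use the nominal field $-\tfrac{c_1}{2}P^{-1}\bm e+\Omega(t)JP\bm e$; time dependence is allowed through $\x_d(t)$. Since $PJP$ is skew, $\dot V=-c_1\|\bm e\|^2$, so Assumptions~\ref{ass:nominal_lyap}--\ref{ass:gradV} hold with $\underline\lambda=1$, $\overline\lambda=k$, $c_2=2k$. Start from $\bm e(t_0)=0$ and apply the constant perturbation $(0,\tfrac{c_1}{4k}\varepsilon)$, which meets \eqref{eq:gain_condition} with equality. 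It drives $\bm e$ close to $(0,\varepsilon/2)$, where $V\approx k\varepsilon^2/4$. A short burst of large $\Omega$ then slides the state along this level set to a point of norm close to $\tfrac{\sqrt k}{2}\varepsilon>\varepsilon$. This contradicts \eqref{eq:pexp_bound}, whose right-hand side is $\varepsilon$ here, and it shows the limit $\tfrac12\sqrt{\overline\lambda/\underline\lambda}\,\varepsilon$ is sharp.

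\textbf{Comparison with the paper.} You should not try to reproduce the paper's $+\varepsilon$. Its proof is your argument: the same decomposition $\dot{\bm e}=\bar{\f}_{\text{cl}}+\bm d_N$, the same absorption of the Jacobian factor $c_Jc_\Delta$ into $\bar\rho_N$ that you propose to make explicit, the same $\tfrac{c_1}{2}$ split, and the same submultiplicativity bound on $s(H_N,\x)$. It reaches $+\varepsilon$ through two faulty steps:
\begin{itemize}
\item It asserts that the Euclidean ball $\mathcal B_\varepsilon$ is forward invariant because $\dot V<0$ on $\|\bm e\|=\varepsilon$. That inference is valid only when $P(H_N)$ is a multiple of the identity.
\item It writes $\dot V\le-\tfrac{c_1}{2\underline\lambda}V$, which uses the wrong side of the sandwich bounds. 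Your rate $\gamma_2=c_1/(4\overline\lambda)$ is the correct one.
\end{itemize}
The sound statements are the bound with $+\gamma_1\varepsilon$ under \eqref{eq:gain_condition}, or the bound with $+\varepsilon$ under the strengthened condition above (your ``tightened margin'', promoted to a hypothesis).

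Your other caveats are also correct, and the paper's proof does not address them. The initial error must lie in a sublevel set of $V$ contained in $\mathcal E$. And \eqref{eq:HN_affine_bound} yields \eqref{eq:aggr_bound_thm} only with $\sup_{\mathcal X_c}\bar\rho_N(\cdot,\delta)$ in place of $\bar\rho_N(\x,\delta)$.
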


\medskip

\textit{Proof:} Under the control law~\eqref{for:framework:ctrl}, the closed-loop dynamics is
\begin{align*}
\dot{\x} =& \f_{\mathrm{dyn}}(\x) + G_{\mathrm{dyn}}(\x)\Bigl(\hff(\x,\x_d)+\hfb(\x)H_N \bm e\Bigr)\\&
\;+\;\bar{\f}(\x)\;-\;G_{\mathrm{dyn}}(\x)K_{\mathrm{dyn}}(\x)\hat{\f}_N(\x).
\end{align*}
Let $\x_d(t)$ be the reference and let $\bm e=\Psi(\x,\x_d)\in\R^{12}$ denote the stacked tracking error map
$(\bm e_p,\bm e_v,\bm e_R,\bm e_\omega)$. By construction $\Psi(\cdot,\x_d)$ is smooth in local coordinates on the considered chart.
Differentiating $\bm e=\Psi(\x,\x_d)$ yields
\[
\dot{\bm e} = \frac{\partial \Psi}{\partial \x}(\x,\x_d)\dot{\x} + \frac{\partial \Psi}{\partial \x_d}(\x,\x_d)\dot{\x}_d.
\]
Define the \emph{nominal} error $\bar \f_{\text{cl}}(\bm e,\x_d,H_N)$ as the right-hand side obtained by setting
$\bar{\f}\equiv 0$ and $\hat{\f}_N\equiv 0$, i.e., by replacing $\dot{\x}$ above with
$\f_{\mathrm{dyn}}(\x)+G_{\mathrm{dyn}}(\x)\bigl(\hff(\x,\x_d)+\hfb(\x)H_N\bm e\bigr)$.
Then the \emph{actual} error dynamics can be written as
\begin{align}\label{eq:err_dyn_proof}
\dot{\bm e} = \bar \f_{\text{cl}}(\bm e,\x_d,H_N) + \bm d_N(\x),
\end{align}
where the perturbation term $\bm d_N(\x)$ is
\begin{align}\label{eq:dN_def}
\frac{\partial \Psi}{\partial \x}(\x,\x_d)\Bigl(\bar{\f}(\x)-G_{\mathrm{dyn}}(\x)K_{\mathrm{dyn}}(\x)\hat{\f}_N(\x)\Bigr).
\end{align}
Since $\frac{\partial \Psi}{\partial \x}$, $G_{\mathrm{dyn}}$, and $K_{\mathrm{dyn}}$ are known smooth maps and $\x(t)\in\mathcal{X}_c$ by assumption,
there exists $c_J>0$ such that $\bigl\|\frac{\partial \Psi}{\partial \x}(\x,\x_d)\bigr\|\le c_J$ for all $\x\in\mathcal X_c$. Define
\[
\Delta_N(\x)\coloneqq \bar{\f}(\x)-G_{\mathrm{dyn}}(\x)K_{\mathrm{dyn}}(\x)\hat{\f}_N(\x)\in\R^{12}.
\]
Then \eqref{eq:dN_def} can be compactly written as $\bm d_N(\x)=\mathcal J(\x)\Delta_N(\x)$ with
$\mathcal J(\x)\coloneqq \frac{\partial \Psi}{\partial \x}(\x,\x_d)$ bounded on $\mathcal X_c$.

To connect the oracle bound in Assumption~\ref{ass:errorbound} to $\Delta_N$, recall that $\bar{\f}(\x)$ is the lifting of the generalized disturbance $\f(\x)\in~\R^6$ into local state coordinates. 
Thus, on $\mathcal X_c$ there exists a bounded map $B(\x)$ such that $\bar{\f}(\x)=B(\x)\f(\x)$. 
Moreover, since $G_{\mathrm{dyn}}$ and $K_{\mathrm{dyn}}$ are continuous on the compact set $\mathcal X_c$, there exists $c_{GK}>0$ such that $\|G_{\mathrm{dyn}}(\x)K_{\mathrm{dyn}}(\x)\|\le c_{GK}$ for all $\x\in\mathcal X_c$.
Consequently, there exists a constant $c_\Delta>0$ such that, on $\mathcal X_c$, $\|\Delta_N(\x)\|\le c_\Delta\,\|\f(\x)-\hat{\f}_N(\x)\|$.

Therefore, on the event $\mathcal A_\delta=\{\|\f(\x)-\hat{\f}_N(\x)\|\le \bar\rho_N(\x,\delta)\ \ \forall \x\in\mathcal X_c\}$,
which has probability at least $\delta$ by Assumption~\ref{ass:errorbound}, 
$\|\Delta_N(\x)\|\le c_\Delta\,\bar\rho_N(\x,\delta),\,\forall \x\in\mathcal X_c$.

Absorbing the constant factor $c_Jc_\Delta$ into $\bar\rho_N$ (i.e., redefining $\bar\rho_N$ by a constant scaling), we may assume without loss of generality that \eqref{eq:err_dyn_proof} holds with $\bm d_N(\x)=\Delta_N(\x)$ and
$\Prob\!\left\{\,\|\Delta_N(\x)\|\le \bar\rho_N(\x,\delta)\ \ \forall \x\in\mathcal X_c\,\right\}\ge \delta$.

Let $V(\bm e)=\bm e^\top P(H_N)\bm e$ be the Lyapunov function from Assumption~\ref{ass:nominal_lyap}. Along trajectories of~\eqref{eq:err_dyn_proof},
\[
\dot V(\bm e) = \nabla_{\bm e} V(\bm e)^\top \bar \f_{\mathrm{cl}}(\bm e,\x_d,H_N) + \nabla_{\bm e} V(\bm e)^\top \Delta_N(\x).
\]
By Assumption~\ref{ass:nominal_lyap},
$\nabla_{\bm e} V(\bm e)^\top \bar \f_{\mathrm{cl}}(\bm e,\x_d,H_N)\le -c_1\|\bm e\|^2$ for all $\bm e\in\mathcal{E}$.
Using Cauchy--Schwarz and Assumption~\ref{ass:gradV},
\begin{align*}
\dot V(\bm e)\le& -c_1\|\bm e\|^2 + \|\nabla_{\bm e} V(\bm e)\|\,\|\Delta_N(\x)\|\\
\le& -c_1\|\bm e\|^2 + c_2\|\bm e\|\,\|\Delta_N(\x)\|.
\end{align*}
On the event $\mathcal{A}_\delta=\{\|\Delta_N(\x)\|\le \bar\rho_N(\x,\delta)\ \forall \x\in\mathcal{X}_c\}$,
which has probability at least $\delta$, we obtain
\begin{align}\label{eq:Vdot_event}
\dot V(\bm e)\le -c_1\|\bm e\|^2 + c_2\bar\rho_N(\x,\delta)\|\bm e\|,\qquad \forall \x\in\mathcal{X}_c.
\end{align}

Assume~\eqref{eq:gain_condition}. Then for all $\x\in\mathcal{X}_c$, $\displaystyle{c_2\bar\rho_N(\x,\delta)\le \frac{c_1}{2}\,\varepsilon}$. Hence, whenever $\|\bm e\|\ge \varepsilon$, inequality~\eqref{eq:Vdot_event} implies
\[
\dot V(\bm e)\le -c_1\|\bm e\|^2 + \frac{c_1}{2}\varepsilon\|\bm e\|
\le -\frac{c_1}{2}\|\bm e\|^2.
\]
Using $V(\bm e)\ge \underline\lambda\|\bm e\|^2$ from~\eqref{eq:nom_lyap_bounds}, we obtain for $\|\bm e\|\ge\varepsilon$,
\begin{align}\label{eq:Vdot_linear}
\dot V(\bm e)\le -\frac{c_1}{2\underline\lambda}\,V(\bm e).
\end{align}
Therefore, on any time interval during which $\|\bm e(t)\|\ge\varepsilon$, Grönwall's inequality gives
\[
V(t)\le V(t_0)\exp\!\Bigl(-\frac{c_1}{2\underline\lambda}(t-t_0)\Bigr).
\]
Furthermore, on the boundary $\|\bm e\|=\varepsilon$, \eqref{eq:Vdot_event} together with \eqref{eq:gain_condition} yields
$\dot V(\bm e)\le -\frac{c_1}{2}\varepsilon^2<0$, hence the set
$\mathcal{B}_\varepsilon=\{\bm e:\|\bm e\|\le\varepsilon\}$ is forward invariant on the event $\mathcal{A}_\delta$.
Thus, trajectories enter $\mathcal{B}_\varepsilon$ in finite time and remain there.

From~\eqref{eq:nom_lyap_bounds},
$\underline\lambda\|\bm e\|^2\le V(\bm e)\le \overline\lambda\|\bm e\|^2$.
Combining this with the decay estimate for $V$ outside $\mathcal{B}_\varepsilon$ yields
\[
\|\bm e(t)\|
\le \sqrt{\frac{\overline\lambda}{\underline\lambda}}\,
\exp\!\Bigl(-\frac{c_1}{4\underline\lambda}(t-t_0)\Bigr)\,\|\bm e(t_0)\|
\quad \text{until }\|\bm e(t)\|=\varepsilon.
\]
After the first hitting time of $\mathcal{B}_\varepsilon$, invariance implies $\|\bm e(t)\|\le\varepsilon$.
A standard stitching argument then yields~\eqref{eq:pexp_bound} with
$\gamma_1=\sqrt{\overline\lambda/\underline\lambda}$ and $\gamma_2=\frac{c_1}{4\underline\lambda}$.
This holds on $\mathcal{A}_\delta$, hence with probability at least $\delta$.

\noindent By definition,
$\displaystyle{s(H_N,\x)=\|\hfb(\x)H_N\|\le \|\hfb(\x)\|\,\|H_N\|}$.
Since $\x\in\mathcal{X}_c$ and $\mathcal{X}_c$ is compact, continuity of $\hfb$ implies
$\sup_{\x\in\mathcal{X}_c}\|\hfb(\x)\|<\infty$.
Using~\eqref{eq:HN_affine_bound}, there exist constants $\alpha_1,\alpha_2>0$, independent of $N$, such that~\eqref{eq:aggr_bound_thm} holds.\hfill$\square$

\medskip

\medskip

\begin{cor}\label{cor:aggr_limit}
Under the assumptions of Theorem~\ref{thm:main}, fix a confidence level $\delta\in(0,1]$ and assume that
\begin{align}\label{eq:rho_to_zero}
\lim_{N\to\infty}\ \sup_{\x\in\mathcal{X}_c}\bar\rho_N(\x,\delta)=0 .
\end{align}
Then, for any $\varepsilon>0$ and any fixed $H\in\mathcal H$, there exists $N_\varepsilon\in\N$ such that for all $N\ge N_\varepsilon$ the gain condition~\eqref{eq:gain_condition} holds (with $c_1,c_2$ corresponding to that $H$). Consequently, for each fixed $N\ge N_\varepsilon$, with probability at least $\delta$, the tracking error satisfies
\begin{align}\label{eq:pexp_bound_cor}
\|\bm e(t)\|\le \gamma_1 e^{-\gamma_2(t-t_0)}\|\bm e(t_0)\|+\varepsilon,\qquad \forall t\ge t_0,
\end{align}
with $\gamma_1=\gamma_1(H)$ and $\gamma_2=\gamma_2(H)$, and the feedback-induced aggressiveness satisfies 
\begin{align}\label{eq:aggr_limit_cor}
\limsup_{N\to\infty}\ \sup_{\x\in\mathcal{X}_c} s(H,\x)\ \le\ \alpha_2 .
\end{align}
\end{cor}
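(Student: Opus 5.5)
The corollary follows from Theorem~\ref{thm:main} applied with a fixed gain $H$, once the uniform bound $\bar\rho_N$ becomes small enough; I would argue in three steps.

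\textbf{Gain condition.} Fix $\delta$, $\varepsilon>0$ and $H\in\mathcal H$. By Assumptions~\ref{ass:nominal_lyap} and~\ref{ass:gradV}, applied with $H_N\equiv H$, the constants $c_1=c_1(H)$ and $c_2=c_2(H)$ are positive and do not depend on $N$. Hence the threshold $\eta:=\tfrac{c_1}{2c_2}\varepsilon$ is a fixed positive number. By \eqref{eq:rho_to_zero} there is $N_\varepsilon$ such that $\sup_{\x\in\mathcal X_c}\bar\rho_N(\x,\delta)\le\eta$ for all $N\ge N_\varepsilon$, which is exactly \eqref{eq:gain_condition}. The bound \eqref{eq:rho_to_zero} ranges over all $N$, whereas Assumption~\ref{ass:switching} caps updates at $N_{\text{end}}$. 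I would read the corollary as a statement about the family of oracles indexed by $N$, with Assumption~\ref{ass:errorbound} holding for each fixed $N$. I would state this interpretation explicitly.

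\textbf{Tracking bound.} For each fixed $N\ge N_\varepsilon$, the first part of Theorem~\ref{thm:main} applies verbatim with $H_N=H$. On the event $\mathcal A_\delta$, which has probability at least $\delta$, it yields \eqref{eq:pexp_bound_cor} with $\gamma_1=\sqrt{\overline\lambda/\underline\lambda}$ and $\gamma_2=c_1/(4\underline\lambda)$. These constants depend only on $H$ through $P(H)$, so they are the same for every such $N$. The probability statement is made separately for each $N$; no union bound over $N$ is claimed.

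\textbf{Aggressiveness.} Here $H$ is fixed. I would apply the second part of the theorem to the constant sequence $H_N=H$, which satisfies \eqref{eq:HN_affine_bound} trivially with any $\kappa_1>0$ and $\kappa_2\ge\|H\|$. Tracing the theorem's proof gives $s(H,\x)\le \bar h\,\|H\|\le \bar h\kappa_1\sup_{\mathcal X_c}\bar\rho_N+\bar h\kappa_2$, where $\bar h:=\sup_{\mathcal X_c}\|\hfb\|<\infty$ by compactness and continuity.

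The main obstacle is that \eqref{eq:aggr_bound_thm} is written pointwise with $\bar\rho_N(\x,\delta)$. To take a supremum over $\x$, I would instead use the intermediate bound above, written in terms of $\sup_{\mathcal X_c}\bar\rho_N$, and set $\alpha_1=\bar h\kappa_1$ and $\alpha_2=\bar h\kappa_2$. Then $\sup_{\x\in\mathcal X_c}s(H,\x)\le\alpha_1\sup_{\mathcal X_c}\bar\rho_N+\alpha_2$. Taking $\limsup_{N\to\infty}$ and using \eqref{eq:rho_to_zero} gives \eqref{eq:aggr_limit_cor}. The only care needed is that $\alpha_1,\alpha_2$ are independent of $N$, which holds because $\bar h$, $\kappa_1$ and $\kappa_2$ are. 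In fact the left-hand side of \eqref{eq:aggr_limit_cor} does not depend on $N$, so the inequality amounts to $\sup_{\mathcal X_c}s(H,\x)\le\bar h\kappa_2$, which is immediate from the choice $\kappa_2\ge\|H\|$.
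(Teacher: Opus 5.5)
Your proposal is correct and follows essentially the same route as the paper: with $H$ fixed, the constants $c_1,c_2,\gamma_1,\gamma_2$ do not depend on $N$, \eqref{eq:rho_to_zero} yields $N_\varepsilon$ for the gain condition, Theorem~\ref{thm:main} gives the tracking bound for each fixed $N\ge N_\varepsilon$, and taking $\sup_{\x\in\mathcal X_c}$ and then $\limsup_{N\to\infty}$ in the aggressiveness bound finishes the argument. Your extra remarks refine this argument rather than change it. Going through the $\sup_{\mathcal X_c}\bar\rho_N$ form is not strictly needed, since the paper simply takes the supremum of the pointwise bound \eqref{eq:aggr_bound_thm}, which is legitimate because $\alpha_1>0$; but your version matches what the theorem's proof actually establishes. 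Your observations that $\sup_{\x}s(H,\x)$ is independent of $N$, and that letting $N\to\infty$ requires reading Assumption~\ref{ass:errorbound} beyond $N_{\text{end}}$, are correct points the paper leaves implicit.
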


\medskip

\textit{Proof:} Fix $\varepsilon>0$ and any fixed $H\in\mathcal H$. By~\eqref{eq:rho_to_zero}, there exists $N_\varepsilon\in\N$ such that for all $N\ge N_\varepsilon$,
$\displaystyle{\sup_{\x\in\mathcal{X}_c}\bar\rho_N(\x,\delta)\le \frac{c_1}{2c_2}\,\varepsilon}$, where $c_1,c_2$ correspond to the Lyapunov inequalities associated with the chosen fixed gain matrix $H$ in Theorem~\ref{thm:main}. Hence the gain condition~\eqref{eq:gain_condition} of Theorem~\ref{thm:main} holds for all $N\ge N_\varepsilon$, and therefore for each fixed $N\ge N_\varepsilon$ the practical exponential bound~\eqref{eq:pexp_bound_cor} follows with probability at least $\delta$.

Moreover, Theorem~\ref{thm:main} implies that for each fixed $N\ge N_\varepsilon$ (on the corresponding event of probability at least $\delta$), for all $\x\in\mathcal{X}_c$ we have
$s(H,\x)\le \alpha_1\,\bar\rho_N(\x,\delta)+\alpha_2$.
Taking $\sup_{\x\in\mathcal{X}_c}$ on both sides and then $\limsup_{N\to\infty}$, and using~\eqref{eq:rho_to_zero}, yields
\[
\limsup_{N\to\infty}\ \sup_{\x\in\mathcal{X}_c} s(H,\x)
\le
\alpha_1 \lim_{N\to\infty}\sup_{\x\in\mathcal{X}_c}\bar\rho_N(\x,\delta) + \alpha_2
=
\alpha_2,
\]
which proves the claim.\hfill$\square$

\medskip
\begin{prop}\label{prop:block_sched}
Assume the setting of Theorem~\ref{thm:main}. Suppose that the residual perturbation in the \emph{error dynamics} can be decomposed as
\[
\Delta_N(\x)=\begin{bmatrix}\Delta_{t,N}(\x)\\ \Delta_{r,N}(\x)\end{bmatrix}\in\R^{12},
\,
\Delta_{t,N}(\x),\Delta_{r,N}(\x)\in\R^{6},
\]
with corresponding high-probability bounds
\[
\|\Delta_{t,N}(\x)\|\le \bar\rho_{t,N}(\x,\delta),\quad
\|\Delta_{r,N}(\x)\|\le \bar\rho_{r,N}(\x,\delta),
\]
$\forall \x\in\mathcal{X}_c$, holding on an event of probability at least $\delta$.

\noindent Assume further that there exist constants $c_{t,1},c_{r,1}>0$ such that, along the actual closed-loop trajectories, the Lyapunov derivative satisfies
\begin{align}\label{eq:Vdot_split}
\dot V(\bm e)\le&
-\lambda_t\|\bm e_t\|^2-\lambda_r\|\bm e_r\|^2
+ c_{t,1}\|\bm e_t\|\,\|\Delta_{t,N}(\x)\|\notag\\&
+ c_{r,1}\|\bm e_r\|\,\|\Delta_{r,N}(\x)\|,
\end{align}
where $\bm e_t=[\bm e_p^\top,\bm e_v^\top]^\top\in\R^6$,  $\bm e_r=[\bm e_R^\top,\bm e_\omega^\top]^\top\in\R^6$, and $\lambda_t,\lambda_r>0$ depend monotonically on the translational gains $(K_{p,N},K_{v,N})$ and rotational gains $(K_{R,N},K_{\omega,N})$.

Choose $H_N$ block-diagonal with translational and rotational blocks scheduled according to the gain-scheduling rule, and pick the coefficients $c_p,c_v,c_R,c_\omega$ such that
\begin{align}\label{eq:lambdacond}
\lambda_t \ge& \frac{2\sqrt{2}\,c_{t,1}}{\varepsilon}\,\sup_{\x\in\mathcal X_c}\bar\rho_{t,N}(\x,\delta),\\
\lambda_r \ge& \frac{2\sqrt{2}\,c_{r,1}}{\varepsilon}\,\sup_{\x\in\mathcal X_c}\bar\rho_{r,N}(\x,\delta).\nonumber
\end{align}
Then, on the event of probability at least $\delta$,
\[
\dot V(\bm e)\le -\tfrac{1}{2}\lambda \|\bm e\|^2
\quad\text{whenever }\|\bm e\|\ge\varepsilon,
\,
\lambda:=\min\{\lambda_t,\lambda_r\},
\]
which implies practical exponential convergence of $\bm e(t)$ to an $\varepsilon$-neighborhood of the origin (cf. Theorem~\ref{thm:main}).
\end{prop}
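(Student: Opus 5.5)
The plan is a direct Lyapunov estimate on the event $\mathcal A_\delta$ of probability at least $\delta$ on which both block bounds hold, patterned on the proof of Theorem~\ref{thm:main}. On $\mathcal A_\delta$ I will replace $\|\Delta_{t,N}(\x)\|$ and $\|\Delta_{r,N}(\x)\|$ in \eqref{eq:Vdot_split} by the uniform quantities $\bar\rho_t:=\sup_{\x\in\mathcal X_c}\bar\rho_{t,N}(\x,\delta)$ and $\bar\rho_r:=\sup_{\x\in\mathcal X_c}\bar\rho_{r,N}(\x,\delta)$. This gives
\[
\dot V\le -\lambda_t\|\bm e_t\|^2-\lambda_r\|\bm e_r\|^2+c_{t,1}\bar\rho_t\|\bm e_t\|+c_{r,1}\bar\rho_r\|\bm e_r\|.
\]
The dissipative part is handled by $-\lambda_t\|\bm e_t\|^2-\lambda_r\|\bm e_r\|^2\le-\lambda\|\bm e\|^2$, since $\|\bm e\|^2=\|\bm e_t\|^2+\|\bm e_r\|^2$.

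For the perturbation part, the conditions \eqref{eq:lambdacond} give $c_{t,1}\bar\rho_t\le \lambda_t\varepsilon/(2\sqrt2)$ and $c_{r,1}\bar\rho_r\le\lambda_r\varepsilon/(2\sqrt2)$. The factor $\sqrt2$ is designed to be absorbed by the Cauchy--Schwarz estimate $\|\bm e_t\|+\|\bm e_r\|\le\sqrt2\,\|\bm e\|$. If both perturbation coefficients are bounded by a common $\mu\,\varepsilon/(2\sqrt2)$, the perturbation is at most $\mu\varepsilon\|\bm e\|/2$. For $\|\bm e\|\ge\varepsilon$ this is at most $\mu\|\bm e\|^2/2$. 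With $\mu=\lambda$ one gets $\dot V\le-\lambda\|\bm e\|^2+\tfrac12\lambda\|\bm e\|^2=-\tfrac12\lambda\|\bm e\|^2$, which is the claim.

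\textbf{Main obstacle.} The conditions \eqref{eq:lambdacond} only give the common bound with $\mu=\max\{\lambda_t,\lambda_r\}$, not $\mu=\lambda=\min\{\lambda_t,\lambda_r\}$. If, for example, $\|\bm e_r\|$ is negligible and $\lambda_t\gg\lambda_r$, the translational cross term is only bounded by $\lambda_t\varepsilon\|\bm e\|/(2\sqrt2)$ against $-\lambda_t\|\bm e\|^2$. That still yields negativity, so the blocks should be treated separately rather than merged into $\mu$ immediately. I will therefore first try a block-by-block bound using $\|\bm e_t\|\le\|\bm e\|$ and $\|\bm e_r\|\le\|\bm e\|$, which for $\|\bm e\|\ge\varepsilon$ gives
\[
\dot V\le-\lambda_t\|\bm e_t\|^2-\lambda_r\|\bm e_r\|^2+\frac{\lambda_t\|\bm e_t\|+\lambda_r\|\bm e_r\|}{2\sqrt2}\,\|\bm e\|.
\]
The remaining difficulty is the regime where one block is small while the other carries the weight. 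There the small block's gain can be much larger than $\lambda$, and the cross term $\lambda_{\max}\|\bm e_{\text{small}}\|\,\|\bm e\|$ is not controlled by $\lambda\|\bm e\|^2$. I do not see how to close this for arbitrary unequal $\lambda_t,\lambda_r$ using \eqref{eq:lambdacond} exactly as stated.

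My fallback is to read the scheduling rule so that the conditions \eqref{eq:lambdacond} hold with $\lambda$ on the left in both lines. This is a genuine strengthening of the hypothesis. It holds automatically if the scheduler balances the blocks so that $\lambda_t=\lambda_r$, and it can always be enforced by raising the smaller block's gains, using the assumed monotonic dependence of $\lambda_t,\lambda_r$ on the gains. Under that reading the common-$\mu$ argument with $\mu=\lambda$ applies verbatim and gives $\dot V\le-\tfrac12\lambda\|\bm e\|^2$ whenever $\|\bm e\|\ge\varepsilon$.

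From that inequality, practical exponential convergence follows exactly as in the proof of Theorem~\ref{thm:main}. Using $V\ge\underline\lambda\|\bm e\|^2$ gives $\dot V\le-\frac{\lambda}{2\overline\lambda}V$ outside $\mathcal B_\varepsilon$, since $V\le\overline\lambda\|\bm e\|^2$. Combined with the forward invariance argument on $\partial\mathcal B_\varepsilon$ and the same stitching step, this yields the bound with $\gamma_1=\sqrt{\overline\lambda/\underline\lambda}$ and $\gamma_2=\lambda/(4\overline\lambda)$, valid on $\mathcal A_\delta$ and hence with probability at least $\delta$.
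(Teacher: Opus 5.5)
Your diagnosis is correct, and the obstacle you describe cannot be closed. The paper's own proof gets past it only through an invalid step. After bounding the cross terms by $\frac{\lambda_t\varepsilon}{2\sqrt2}\|\bm e_t\|$ and $\frac{\lambda_r\varepsilon}{2\sqrt2}\|\bm e_r\|$, it asserts that their sum is at most $\frac{\lambda}{2\sqrt2}\varepsilon(\|\bm e_t\|+\|\bm e_r\|)$ with $\lambda=\min\{\lambda_t,\lambda_r\}$. That inequality requires $\lambda\ge\max\{\lambda_t,\lambda_r\}$, which is exactly your $\mu=\max$ versus $\mu=\min$ point.

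As stated, \eqref{eq:Vdot_split} and \eqref{eq:lambdacond} do not imply the claimed inequality. Take $\varepsilon=1$, $\lambda_r=1$, $\lambda_t=32$, and both conditions in \eqref{eq:lambdacond} with equality. Let \eqref{eq:Vdot_split} be attained, e.g.\ $V=\|\bm e\|^2$, $\dot{\bm e}_t=-\tfrac{\lambda_t}{2}\bm e_t+\Delta_{t,N}$, $\dot{\bm e}_r=-\tfrac{\lambda_r}{2}\bm e_r+\Delta_{r,N}$, $c_{t,1}=c_{r,1}=2$, with each residual aligned with its error block at maximal norm. The nominal part still satisfies Assumption~\ref{ass:nominal_lyap} with $c_1=1$. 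At $\|\bm e_t\|=1/(4\sqrt2)$, $\|\bm e_r\|=1$ one has $\|\bm e\|>\varepsilon$, yet
\[
\dot V=-1-1+2+\tfrac{1}{2\sqrt2}>0 .
\]
So not even $\dot V<0$ holds outside the $\varepsilon$-ball. This is precisely your regime of a small block carrying a large gain.

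Your fallback, $\min\{\lambda_t,\lambda_r\}\ge\frac{2\sqrt2}{\varepsilon}\max\{c_{t,1}\bar\rho_t,\,c_{r,1}\bar\rho_r\}$, is a correct repair. Under it your common-$\mu$ argument is complete, and it is the argument the paper intends. It has a cost, though: the block with the smaller residual must now meet the larger threshold, which undercuts the point of block-wise scheduling. Also, ``can always be enforced'' tacitly needs $\lambda_t,\lambda_r$ to grow faster than $c_{t,1},c_{r,1}$, which in general depend on $P(H_N)$. If you prefer to keep \eqref{eq:lambdacond} exactly as stated, the same computation with the maximum gives
\[
\dot V\le-\lambda\|\bm e\|^2+\tfrac12\max\{\lambda_t,\lambda_r\}\,\varepsilon\|\bm e\|.
\]
Hence $\dot V\le-\tfrac12\lambda\|\bm e\|^2$ holds for $\|\bm e\|\ge\varepsilon\max\{\lambda_t,\lambda_r\}/\min\{\lambda_t,\lambda_r\}$, i.e.\ the same conclusion with an inflated radius.

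For the last step, your rate $\dot V\le-\frac{\lambda}{2\overline\lambda}V$ is the correct one. The proof of Theorem~\ref{thm:main} divides by $\underline\lambda$ at that point, which goes the wrong way. However, the invariance argument you import from that proof does not work as written. $\dot V<0$ for $\|\bm e\|\ge\varepsilon$ makes the sublevel set $\{V\le\overline\lambda\varepsilon^2\}$ forward invariant, not the Euclidean ball $\mathcal{B}_\varepsilon$, unless $P(H_N)$ is a multiple of the identity. Stitching then gives $\|\bm e(t)\|\le\sqrt{\overline\lambda/\underline\lambda}\,\bigl(e^{-\gamma_2(t-t_0)}\|\bm e(t_0)\|+\varepsilon\bigr)$, so the ultimate bound is $\sqrt{\overline\lambda/\underline\lambda}\,\varepsilon$. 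To certify exactly $\varepsilon$, you must shrink the radius in the hypotheses by that factor.
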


\medskip

\textit{Proof:}
Fix $\varepsilon>0$ and consider the event on which the bounds on $\Delta_{t,N}$ and $\Delta_{r,N}$ hold for all $\x\in\mathcal{X}_c$.
From~\eqref{eq:Vdot_split} and the bounds, for any $\x\in\mathcal X_c$,
\begin{align*}
\dot V(\bm e)\le&
-\lambda_t\|\bm e_t\|^2-\lambda_r\|\bm e_r\|^2
+ c_{t,1}\|\bm e_t\|\,\bar\rho_{t,N}(\x,\delta)
\\&+ c_{r,1}\|\bm e_r\|\,\bar\rho_{r,N}(\x,\delta).
\end{align*}

Taking suprema over $\mathcal X_c$ and defining $\displaystyle{\bar\rho_{t,N}^{\max}:=\sup_{\x\in\mathcal X_c}\bar\rho_{t,N}(\x,\delta)}$ and $
\bar\rho_{r,N}^{\max}:=\sup_{\x\in\mathcal X_c}\bar\rho_{r,N}(\x,\delta)$, 
\[
\dot V(\bm e)\le
-\lambda_t\|\bm e_t\|^2-\lambda_r\|\bm e_r\|^2
+ c_{t,1}\bar\rho_{t,N}^{\max}\|\bm e_t\|
+ c_{r,1}\bar\rho_{r,N}^{\max}\|\bm e_r\|.
\]

Now assume $\|\bm e\|\ge\varepsilon$, where $\|\bm e\|^2=\|\bm e_t\|^2+\|\bm e_r\|^2$.
Since $\|\bm e_t\|+\|\bm e_r\|\le \sqrt{2}\,\|\bm e\|$, the scheduling conditions~\eqref{eq:lambdacond} imply
\[
c_{t,1}\bar\rho_{t,N}^{\max}\|\bm e_t\|
\le \frac{\lambda_t\varepsilon}{2\sqrt{2}}\,\|\bm e_t\|,
\qquad
c_{r,1}\bar\rho_{r,N}^{\max}\|\bm e_r\|
\le \frac{\lambda_r\varepsilon}{2\sqrt{2}}\,\|\bm e_r\|.
\]
Using $\varepsilon\le \|\bm e\|$ and $\|\bm e_t\|+\|\bm e_r\|\le \sqrt{2}\,\|\bm e\|$, we further obtain
\[
\frac{\lambda_t\varepsilon}{2\sqrt{2}}\,\|\bm e_t\|+\frac{\lambda_r\varepsilon}{2\sqrt{2}}\,\|\bm e_r\|
\le
\frac{\lambda}{2\sqrt{2}}\,\varepsilon\bigl(\|\bm e_t\|+\|\bm e_r\|\bigr)
\le
\frac{\lambda}{2}\,\|\bm e\|^2,
\] $\lambda:=\min\{\lambda_t,\lambda_r\}$. Therefore, whenever $\|\bm e\|\ge\varepsilon$,
\begin{align*}
\dot V(\bm e)\le&
-\lambda_t\|\bm e_t\|^2-\lambda_r\|\bm e_r\|^2+\frac{\lambda}{2}\|\bm e\|^2\\
\le&
-\lambda\|\bm e\|^2+\frac{\lambda}{2}\|\bm e\|^2
=
-\frac{\lambda}{2}\|\bm e\|^2.
\end{align*}
Thus $V(\bm e)$ decreases outside the $\varepsilon$-ball in $\bm e$. Standard comparison arguments, together with the quadratic bounds on $V$ from Assumption~\ref{ass:nominal_lyap}, imply practical exponential convergence to an $\varepsilon$-neighborhood of the origin.\hfill$\square$

\subsection{Implementation Procedure}\label{sec:procedure}
We conclude this section by summarizing the practical implementation steps of the proposed aggressiveness-aware learning-augmented controller.

\begin{enumerate}
\item \textit{State estimation and error computation.} At each control cycle, measure/estimate the current state $\x(t)$ (e.g., via IMU fusion and attitude/velocity estimation) and compute the tracking error $\bm e(t)$ w.r.t $\x_d(t)$.

\item \textit{Disturbance-label construction.} Construct a data-driven estimate $\y(t)\in\R^6$ of the generalized disturbance $\f(\x(t))$ by rearranging the dynamics in~\eqref{for:se3system}, i.e., subtracting known model contributions from measured/estimated accelerations and applied inputs. In practice, the required accelerations can be obtained from IMU-based estimates and/or filtered numerical differentiation of velocity and angular-rate estimates. Append the pair $(\x(t),\tilde{\y}(t))$ to the dataset, where $\tilde{\y}(t)=\y(t)+\bm\eta(t)$ and $\bm\eta(t)$ models measurement noise (zero-mean sub-Gaussian noise, with Gaussian noise $\mathcal N(0,\sigma^2 I_6)$ as a special case).

\item \textit{Oracle update (offline/online/hybrid).} On each dataset update interval $t\in[t_N,t_{N+1})$, freeze the dataset as $\mathcal D_N$ and train/update the oracle to obtain $\hat{\f}_N:\mathcal X\to\R^6$ (e.g., the GP posterior mean in Section~\ref{sec:gp}).

\item \textit{High-probability model-error bound.} Choose a compact set $\mathcal X_c$ capturing the expected closed-loop operating envelope (e.g., a tube around the reference trajectory) and compute a high-probability bound $\bar\rho_N(\cdot,\delta)$ on $\mathcal X_c$ such that, with probability at least $\delta$, the oracle error is uniformly bounded on $\mathcal X_c$ (cf. Assumption~\ref{ass:errorbound}). For GP oracles, $\bar\rho_N$ can be expressed in terms of posterior variances and information-gain quantities.

\item \textit{Aggressiveness-aware gain scheduling.} Select feedback gains $H_N\in\mathcal H$ to satisfy the tracking specification while reducing feedback-induced aggressiveness. In particular, enforce the sufficient condition~\eqref{eq:gain_condition} using conservative estimates/bounds for the Lyapunov constants $c_1,c_2$ in Assumptions~\ref{ass:nominal_lyap}--\ref{ass:gradV}. If separate translational and rotational error bounds are available, schedule the corresponding blocks of $H_N$ (Proposition~\ref{prop:block_sched}).

\item \textit{Learning-augmented control input.} Apply the learning-augmented control law~\eqref{for:framework:ctrl}. 
\end{enumerate}

\section{Numerical Results}\label{sec:sim}

We evaluate the proposed aggressiveness-aware learning-augmented control framework via numerical simulations. 
The goal is to illustrate the trade-off between tracking performance and feedback-induced aggressiveness, and to show how an aggressiveness-aware gain selection can achieve a prescribed tracking tolerance with reduced control variation.

\subsection{Simulation Setup}
The simulated quadrotor follows the model~\eqref{for:se3system} with standard small-scale UAV parameters (mass $m=1~\mathrm{kg}$ and diagonal inertia). 
The vehicle tracks a smooth reference trajectory with nontrivial translational excitation in $x$--$y$ and small altitude modulation, while maintaining constant yaw. 
We report the position tracking error norm $\|e_p(t)\|$ and use the tolerance $\varepsilon=0.1~\mathrm{m}$ as a practical tracking requirement over a simulation horizon of $T=20~\mathrm{s}$.

To emulate realistic model mismatch, we inject unknown external forces and moments that combine (i) drag-like terms, (ii) bounded lateral ``wind'' components, and (iii) persistent oscillatory disturbances in the vertical force and yaw moment channels, which require sustained control activity. 
The overall disturbance magnitude is scaled by a factor $\texttt{DIST\_SCALE}\in\{1,3\}$ to represent moderate and severe mismatch scenarios. Throughout the reported runs, the commanded inputs remained within the set $\mathcal U$.

We compare three gain-selection strategies:
(i) \emph{fixed-low} feedback gains (translation scaling $\texttt{trans\_scale}=1.0$),
(ii) \emph{fixed-high} feedback gains ($\texttt{trans\_scale}=1.8$), and
(iii) an \emph{aggressiveness-aware} selection that sweeps $\texttt{trans\_scale}\in[1.0,2.5]$ and picks the smallest value that satisfies $\|e_p(T)\|\le \varepsilon$ when feasible (otherwise returning the best-performing value in the tested grid).
In addition to the theoretical metric $s(H,\x)$, aggressiveness is quantified in simulation through observable input-variation metrics: the thrust-rate magnitude $|\dot T|$ and torque-rate norm $\|\dot{\tau}\|$. 
To highlight transient effects, we compute RMS values separately on a transient window $[0,3]~\mathrm{s}$ (shaded in gray in the plots) and on the steady window $[3,20]~\mathrm{s}$. 
\subsection{Performance--Aggressiveness Trade-off}
To emulate the gain-selection viewpoint of~\eqref{for:se3maxtrack} in a simulation setting, we implement a sweep-based scheduler: for a grid of values $\texttt{trans\_scale}\in[1.0,2.5]$, we simulate the closed loop and select the smallest gain that satisfies the practical tolerance $\|e_p(T)\|\le\varepsilon$ when feasible. In the theoretical development, feasibility is certified via a model-error bound $\bar\rho_N(\cdot,\delta)$; here we mimic this certification through the empirical sweep.

\paragraph{Moderate disturbance ($\texttt{DIST\_SCALE}=1$).}
Fig.~\ref{fig:perf} shows the performance--aggressiveness trade-off. Increasing feedback gains improves tracking: the final position error decreases from $\|e_p(T)\|=0.116~\mathrm{m}$ (fixed-low) to $0.060~\mathrm{m}$ (fixed-high), while the peak error remains nearly unchanged (within numerical resolution, $\max_t\|e_p(t)\|\approx 0.443~\mathrm{m}$). This improvement comes with increased transient aggressiveness, as shown in Fig.~\ref{fig:aggr}: the transient RMS rises from $|\dot T|_{\mathrm{RMS,tr}}=7.492~\mathrm{N/s}$ and $\|\dot{\tau}\|_{\mathrm{RMS,tr}}=5.777~\mathrm{Nm/s}$ (fixed-low) to $9.925~\mathrm{N/s}$ and $8.201~\mathrm{Nm/s}$ (fixed-high). In contrast, the steady-state torque-rate is nearly unchanged ($\|\dot{\tau}\|_{\mathrm{RMS,ss}}\approx 0.059~\mathrm{Nm/s}$ in both cases), indicating that the persistent oscillatory disturbances dominate the steady control modulation, whereas increased feedback mainly amplifies the transient response. The aggressiveness-aware scheduler selects $\texttt{trans\_scale}=1.2$ as the smallest gain satisfying $\varepsilon=0.1~\mathrm{m}$, yielding $\|e_p(T)\|=0.094~\mathrm{m}$ while reducing transient aggressiveness relative to fixed-high ($|\dot T|_{\mathrm{RMS,tr}}=8.121~\mathrm{N/s}$ and $\|\dot{\tau}\|_{\mathrm{RMS,tr}}=6.416~\mathrm{Nm/s}$).

\paragraph{Severe disturbance ($\texttt{DIST\_SCALE}=3$).}
When the disturbance magnitude is tripled, fixed-low gains fail to meet the tolerance, ending at $\|e_p(T)\|=0.351~\mathrm{m}$, while fixed-high gains improve tracking but still violate $\varepsilon$ with $\|e_p(T)\|=0.190~\mathrm{m}$. Within the tested grid, no gain value satisfies $\varepsilon=0.1~\mathrm{m}$; the sweep therefore returns the best-performing tested gain (largest $\texttt{trans\_scale}$), namely $\texttt{trans\_scale}=2.5$, resulting in $\|e_p(T)\|=0.134~\mathrm{m}$. Transient aggressiveness increases with gain magnitude (e.g., fixed-low versus fixed-high: $|\dot T|_{\mathrm{RMS,tr}}=7.887\rightarrow 10.411~\mathrm{N/s}$ and $\|\dot{\tau}\|_{\mathrm{RMS,tr}}=5.734\rightarrow 8.154~\mathrm{Nm/s}$), and the largest gain further increases torque-rate transients ($\|\dot{\tau}\|_{\mathrm{RMS,tr}}=10.886~\mathrm{Nm/s}$ at $\texttt{trans\_scale}=2.5$). Meanwhile, the steady torque-rate remains of comparable magnitude across strategies ($\|\dot{\tau}\|_{\mathrm{RMS,ss}}\approx 0.177~\mathrm{Nm/s}$). These results indicate that under severe mismatch, achieving strict tracking tolerances requires either improved disturbance compensation or accepting higher aggressiveness.

\begin{figure}[h!]
\centering
\begin{subfigure}{0.48\linewidth}
\centering
\includegraphics[width=\linewidth]{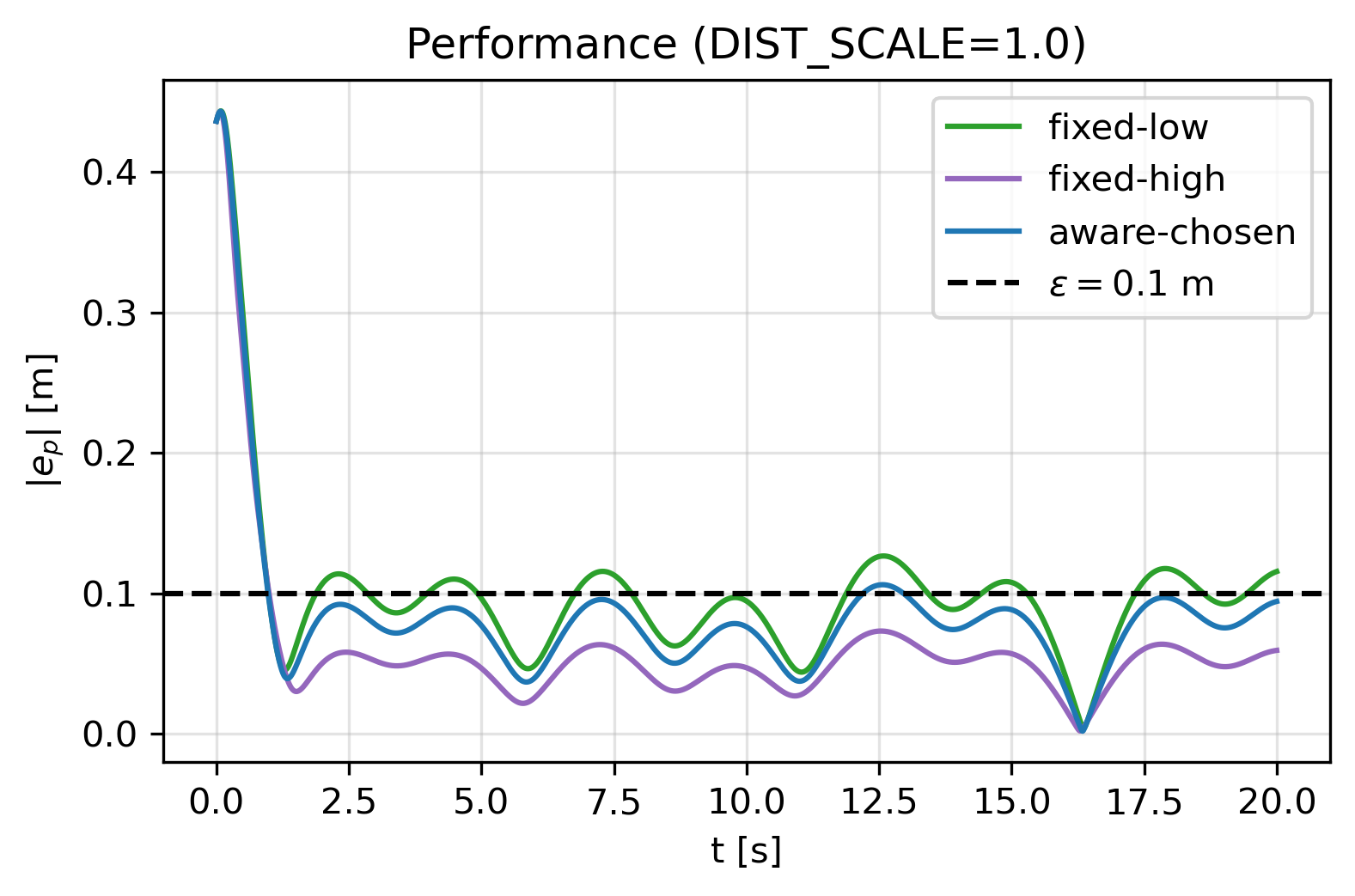}
\caption{$\texttt{DIST\_SCALE}=1$}
\end{subfigure}\hfill
\begin{subfigure}{0.48\linewidth}
\centering
\includegraphics[width=\linewidth]{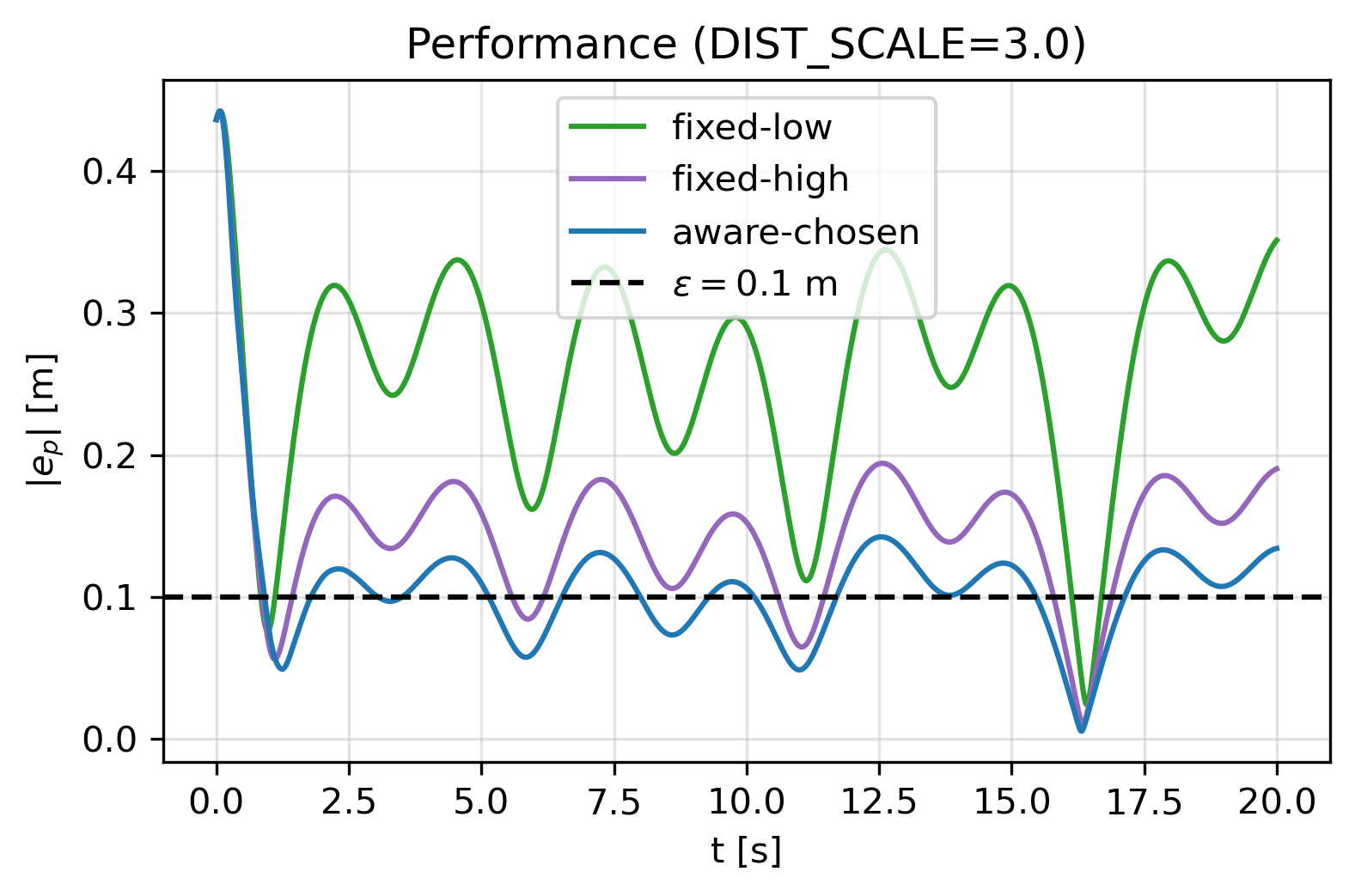}
\caption{$\texttt{DIST\_SCALE}=3$}
\end{subfigure}
\caption{Tracking performance under fixed-low, fixed-high, and aggressiveness-aware gain selection. The dashed line denotes the tolerance $\varepsilon=0.1~\mathrm{m}$.}
\label{fig:perf}
\end{figure}

\begin{figure}[h!]
\centering
\begin{subfigure}{0.48\linewidth}
\centering
\includegraphics[width=\linewidth]{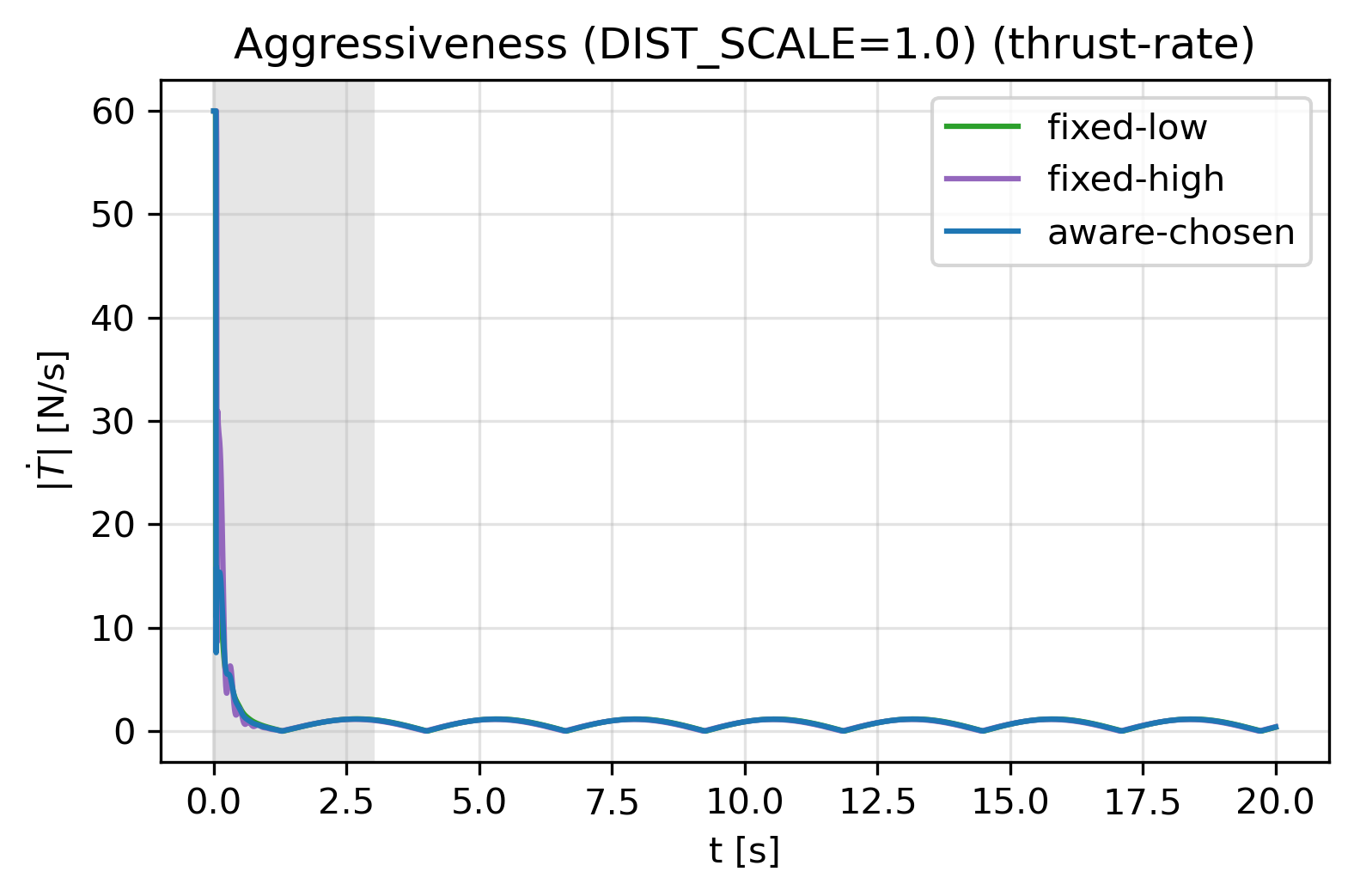}
\caption{$|\dot T|$, $\texttt{DIST\_SCALE}=1$}
\end{subfigure}\hfill
\begin{subfigure}{0.48\linewidth}
\centering
\includegraphics[width=\linewidth]{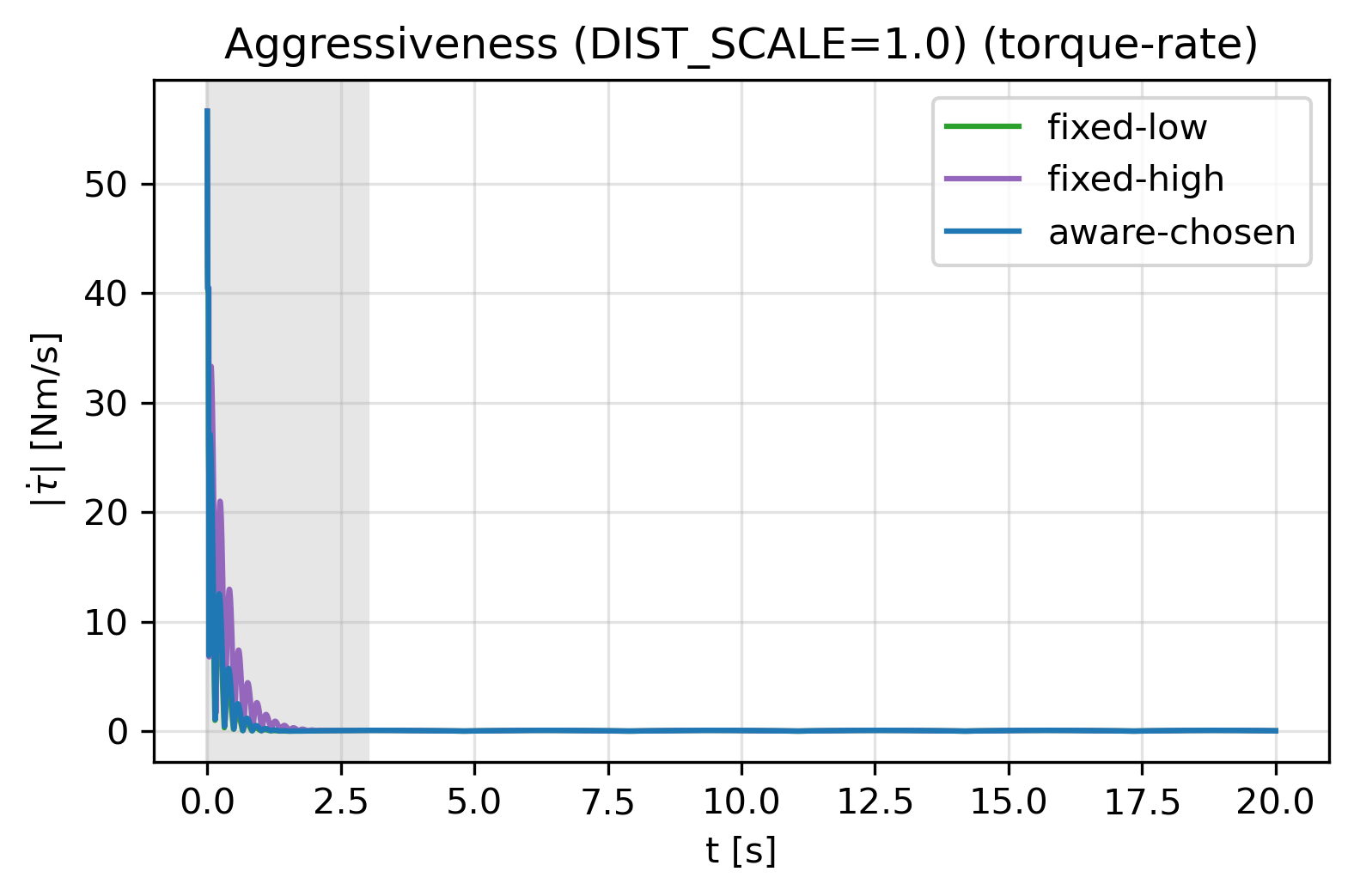}
\caption{$\|\dot\tau\|$, $\texttt{DIST\_SCALE}=1$}
\end{subfigure}

\vspace{2mm}

\begin{subfigure}{0.48\linewidth}
\centering
\includegraphics[width=\linewidth]{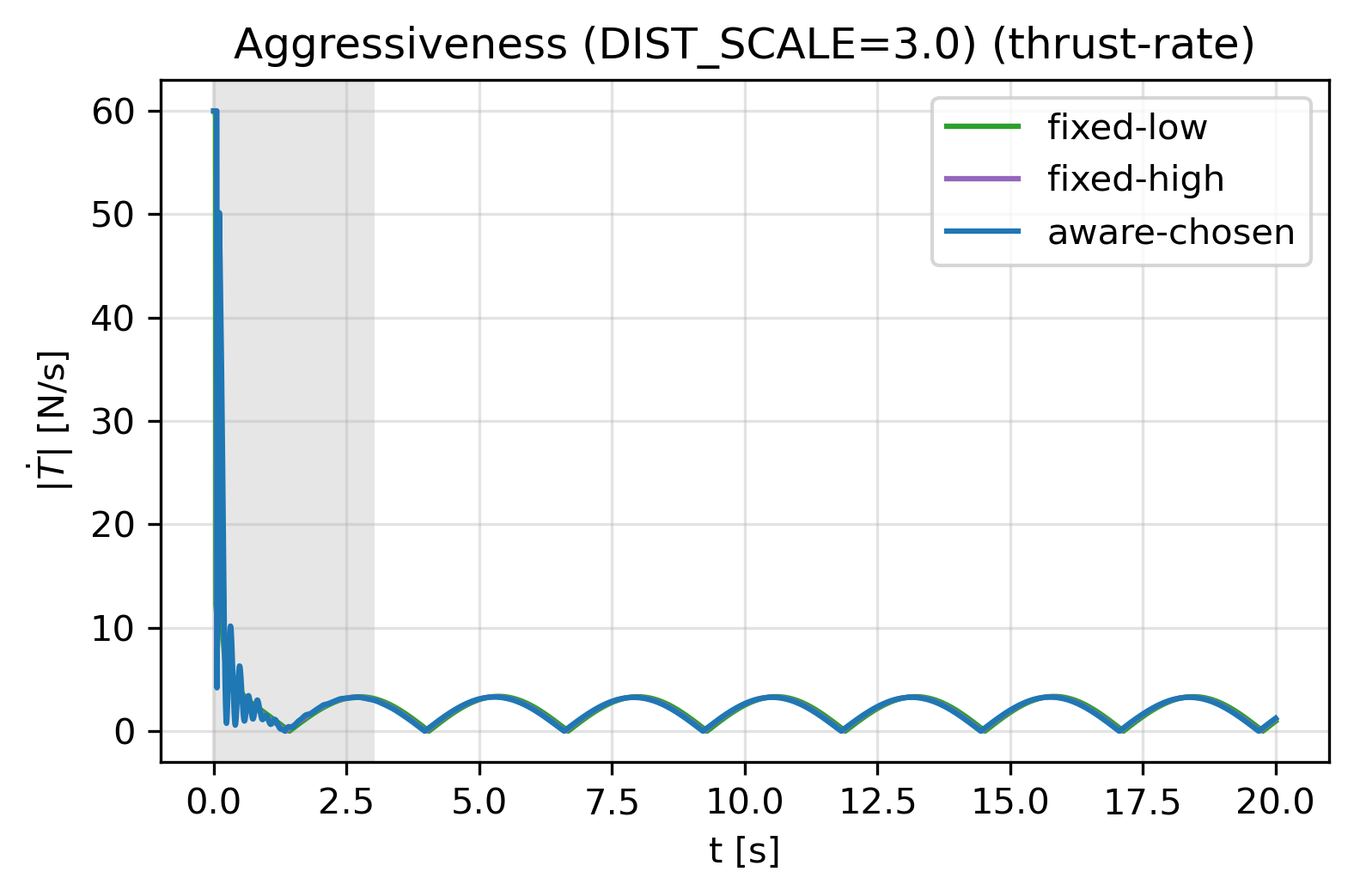}
\caption{$|\dot T|$, $\texttt{DIST\_SCALE}=3$}
\end{subfigure}\hfill
\begin{subfigure}{0.48\linewidth}
\centering
\includegraphics[width=\linewidth]{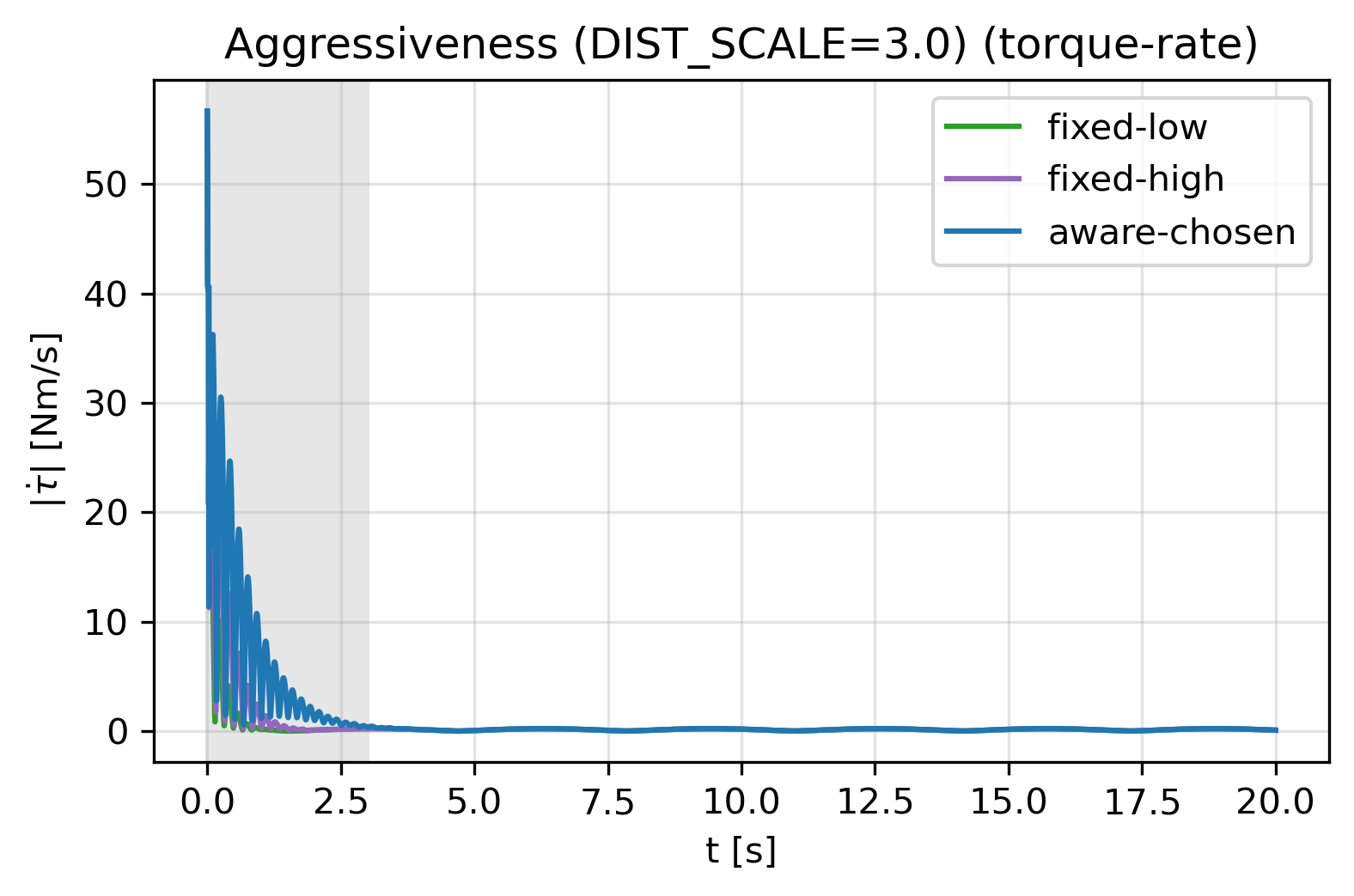}
\caption{$\|\dot\tau\|$, $\texttt{DIST\_SCALE}=3$}
\end{subfigure}
\caption{Aggressiveness metrics under fixed-low, fixed-high, and aggressiveness-aware gain selection. The shaded region marks the transient window $[0,3]~\mathrm{s}$ used for transient RMS.}
\label{fig:aggr}
\end{figure}



\subsection{GP-based disturbance compensation}
\label{subsec:offline}

We validate the theoretical results through numerical simulations under parametric mismatch and exogenous disturbances, and explicitly connect the observed feasibility--aggressiveness trade-off to the probabilistic bounds. The disturbance oracle learns a $6$-dimensional generalized disturbance (force + torque) using $6$ independent GP regressors. Each regressor maps a $20$-dimensional feature vector
$z = [p, v, q, \omega, \sin(\cdot), \cos(\cdot), \texttt{DIST\_SCALE}]$
to one component of $[f^\top,\ \tau_d^\top]^\top\in\mathbb{R}^6$.
Training data are collected from simulations at $\texttt{DIST\_SCALE}\in\{1,3\}$ with target normalization by disturbance scale enabled (\texttt{normalize\_by\_dist=True}), yielding $N_{\mathcal D}=1162$ samples with feature matrix $Z\in\mathbb{R}^{1162\times 20}$ and target matrix $Y\in\mathbb{R}^{1162\times 6}$.

The GP prior uses a squared-exponential (a.k.a.\ radial-basis-function, RBF) covariance with automatic relevance determination (ARD), i.e., one characteristic length-scale per input coordinate, and an additive i.i.d.\ white-noise term. Concretely, for each output channel $j\in\{1,\dots,6\}$ we fit an independent GP with kernel
\[
k_j(z,z') \;=\; \sigma_{f,j}^2\,
\exp\!\Big(-\tfrac12\sum_{i=1}^{20}\frac{(z_i-z'_i)^2}{\ell_{j,i}^2}\Big)
\;+\;\sigma_{n,j}^2\,\delta_{z,z'} ,
\]
where $\sigma_{f,j}^2$ is the signal variance (kernel amplitude), $\ell_{j,i}$ are the ARD length-scales, and $\sigma_{n,j}^2$ is the observation-noise variance. For the trained model, the learned kernel reported in the log (output~1) is $k_1(z,z') \;=\; (0.928)^2\,\mathrm{SE}\text{-}\mathrm{ARD}(z,z';\ell)
\;+\;(10^{-6})\,\delta_{z,z'}$, with the optimized ARD length-scales $\ell$
corresponding to the 20-dimensional feature vector used for GP regression. Large values (near the upper bound) indicate weak sensitivity of the posterior mean to the associated feature, whereas smaller values identify the most informative coordinates for predicting the disturbance residual.

To ensure that GP-based compensation is applied only when the model is confident, the raw GP mean is further modulated by an uncertainty-based gate. Specifically, at each time step we compute a scalar uncertainty $\rho(t)$ as the Euclidean norm of the predicted standard deviations across the six outputs, and apply a smooth sigmoid gate $g(\rho)\in[0,1]$ (with reported steady-state average gate) so that compensation is suppressed when epistemic uncertainty is high. The gated compensation is additionally saturated and low-pass filtered to prevent spurious high-frequency actuation. The GP-comp controller restores feasibility under severe mismatch: the learning-augmented loop achieves $\|e_p(T)\|=0.028\,\mathrm{m}$ (well below $\varepsilon=0.10\,\mathrm{m}$) while keeping aggressiveness metrics comparable to the non-learning baselines (e.g., $|\dot{T}|_{\mathrm{RMS,tr}}=8.066\,\mathrm{N/s}$, $\|\dot{\tau}\|_{\mathrm{RMS,tr}}=6.077\,\mathrm{Nm/s}$; steady-state $|\dot{T}|_{\mathrm{RMS,ss}}=2.251\,\mathrm{N/s}$, $\|\dot{\tau}\|_{\mathrm{RMS,ss}}=0.175\,\mathrm{Nm/s}$). The steady-state gate mean remains high ($\texttt{gate\_mean(ss)}=0.930$), indicating consistent use of the learned compensation where the controller must overcome persistent oscillatory disturbances. 

Figure~\ref{fig:offline_dist3} reports tracking for three controllers: (i) a conservative baseline (\texttt{fixed-low}), (ii) a more aggressive baseline (\texttt{fixed-high}), and (iii) the proposed GP-compensated controller with aggressiveness-aware gain selection (\texttt{GP-comp (aware)}). In this regime, \texttt{fixed-low} fails to meet the tracking requirement $\varepsilon=0.10$~m, ending at $\|e_p(T)\|=0.351$~m, and even \texttt{fixed-high} remains outside the tolerance with $\|e_p(T)\|=0.190$~m. In contrast, GP compensation restores feasibility with $\|e_p(T)\|=0.028$~m while selecting the \emph{minimum} translation gain scale that satisfies the tolerance in the offline sweep (here, $\texttt{trans\_scale}=1.0$). Quantitative aggressiveness and effort metrics are reported in Table~\ref{tab:metrics_dist3}.

\begin{figure}[h!]
  \centering
  \includegraphics[width=0.47\textwidth]{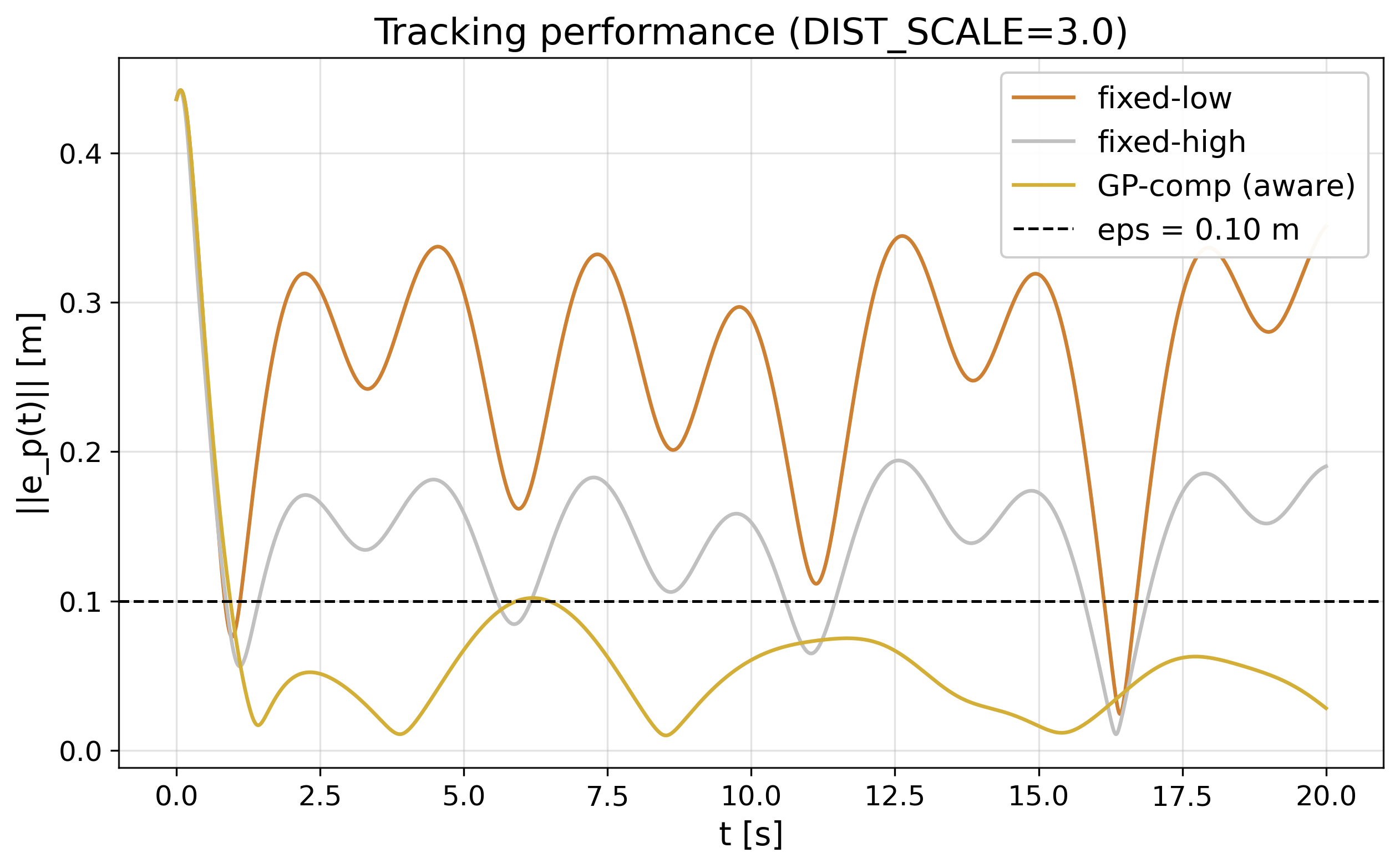}
  \caption{Numerical validation under severe mismatch ($\texttt{DIST\_SCALE}=3$): tracking performance for \texttt{fixed-low}, \texttt{fixed-high}, and \texttt{GP-comp (aware)}. The dashed line indicates the tracking tolerance $\varepsilon=0.10$~m.}
  \label{fig:offline_dist3}
\end{figure}

\begin{table}[h!]
\centering
\begin{tabular}{lccc}
\toprule
Metric & fixed-low & fixed-high & GP-comp (aware) \\
\midrule
$\|e_p(T)\|$ [m] & 0.351 & 0.190 & 0.028 \\
\addlinespace
$|\dot T|_{\mathrm{RMS,tr}}$ [N/s] & 7.887 & 10.411 & 8.066 \\
$\|\dot\tau\|_{\mathrm{RMS,tr}}$ [Nm/s] & 5.734 & 8.154 & 6.077 \\
$|\dot T|_{\mathrm{RMS,ss}}$ [N/s] & 2.327 & 2.290 & 2.251 \\
$\|\dot\tau\|_{\mathrm{RMS,ss}}$ [Nm/s] & 0.177 & 0.177 & 0.175 \\
\addlinespace
$|T-mg|_{\mathrm{RMS,tr}}$ [N] & 1.813 & 1.844 & 1.819 \\
$\|\tau\|_{\mathrm{RMS,tr}}$ [Nm] & 0.250 & 0.373 & 0.252 \\
$|T-mg|_{\mathrm{RMS,ss}}$ [N] & 2.002 & 1.980 & 1.958 \\
$\|\tau\|_{\mathrm{RMS,ss}}$ [Nm] & 0.172 & 0.172 & 0.170 \\
\addlinespace
$\|H\|_{\mathrm{F}}$ [--] & 17.866 & 26.961 & 17.866 \\
gate$_{\mathrm{mean,ss}}$ [--] & -- & -- & 0.930 \\
$\rho_{\mathrm{mean,ss}}$ [--] & -- & -- & 0.069 \\
\bottomrule
\end{tabular}
\caption{Numerical metrics for GP-based disturbance compensation with $\texttt{DIST\_SCALE}=3$. ``tr'' denotes $t\in[0,3]$~s and ``ss'' $t\in[3,T]$. Effort metrics use $T-mg$ and $\|\tau\|$.}
\label{tab:metrics_dist3}
\end{table}

\subsection{Online learning performance}
\label{subsec:online}

We additionally report a representative online experiment to illustrate the behavior of the scheme when the disturbance oracle is refined during execution. The GP model is first trained offline and then updated online using streaming data and a gating/activation mechanism to prevent unsafe corrections when uncertainty is high. We compare:
(i) \emph{fixed-low} gains without learning (baseline),
(ii) \emph{offline GP} compensation using a GP prior trained offline, and
(iii) \emph{offline+online} where a residual GP is updated online on top of the offline prior.
In all cases, we report the position tracking error $\|e_p(t)\|$ and the aggressiveness metrics $|\dot T|$ and $\|\dot\tau\|$, with transient and steady-state RMS computed on the same windows used throughout the paper, namely $[0,3]$\,s and $[3,20]$\,s.

The offline dataset was collected over two disturbance levels \texttt{DIST\_SCALE}$\in\{1,3\}$ and yields
$Z\in\mathbb{R}^{1162\times 20}$ input features and $Y\in\mathbb{R}^{1162\times 6}$ targets (3 force + 3 moment channels).
A six-output GP model was trained (independent GPs per output). To limit the effect of potentially unreliable online updates, the residual GP correction is gated based on an uncertainty measure. In the reported run, the online residual GP acquired \texttt{online\_points}$=350$ samples, and the gate statistics over steady-state were
\texttt{gate\_off\_mean(ss)}$=0.911$ and \texttt{gate\_on\_mean(ss)}$=0.982$, indicating that (i) the offline prior is trusted most of the time and (ii) the online residual is activated primarily when the uncertainty drops.

The baseline \emph{fixed-low} controller fails to satisfy the tolerance under severe mismatch, ending at $\|e_p(T)\|=0.351$\,m ($\max_t\|e_p(t)\|=0.442$\,m). 
In contrast, \emph{offline GP} compensation satisfies the tolerance with $\|e_p(T)\|=0.028$\,m, and adding the \emph{offline+online} residual update yields a small further improvement to $\|e_p(T)\|=0.025$\,m while preserving essentially the same aggressiveness levels. 
In particular, the transient RMS values remain comparable ($|\dot T|_{\mathrm{RMS,tr}}=8.066$\,N/s in both cases, and $\|\dot\tau\|_{\mathrm{RMS,tr}}=6.077\rightarrow 6.097$\,Nm/s), with steady-state rates essentially unchanged.

These outcomes support the core message of the theory: improving the model/oracle reduces the residual perturbation entering the error dynamics, thereby enabling practical exponential tracking with a prescribed tolerance while avoiding the need to increase feedback gains. In particular, for \texttt{DIST\_SCALE}$=3.0$ the learning-augmented controller restores feasibility of the $\varepsilon$-tracking requirement that is violated by low-gain feedback alone.

\begin{figure}[t]
  \centering
  \includegraphics[width=0.97\linewidth]{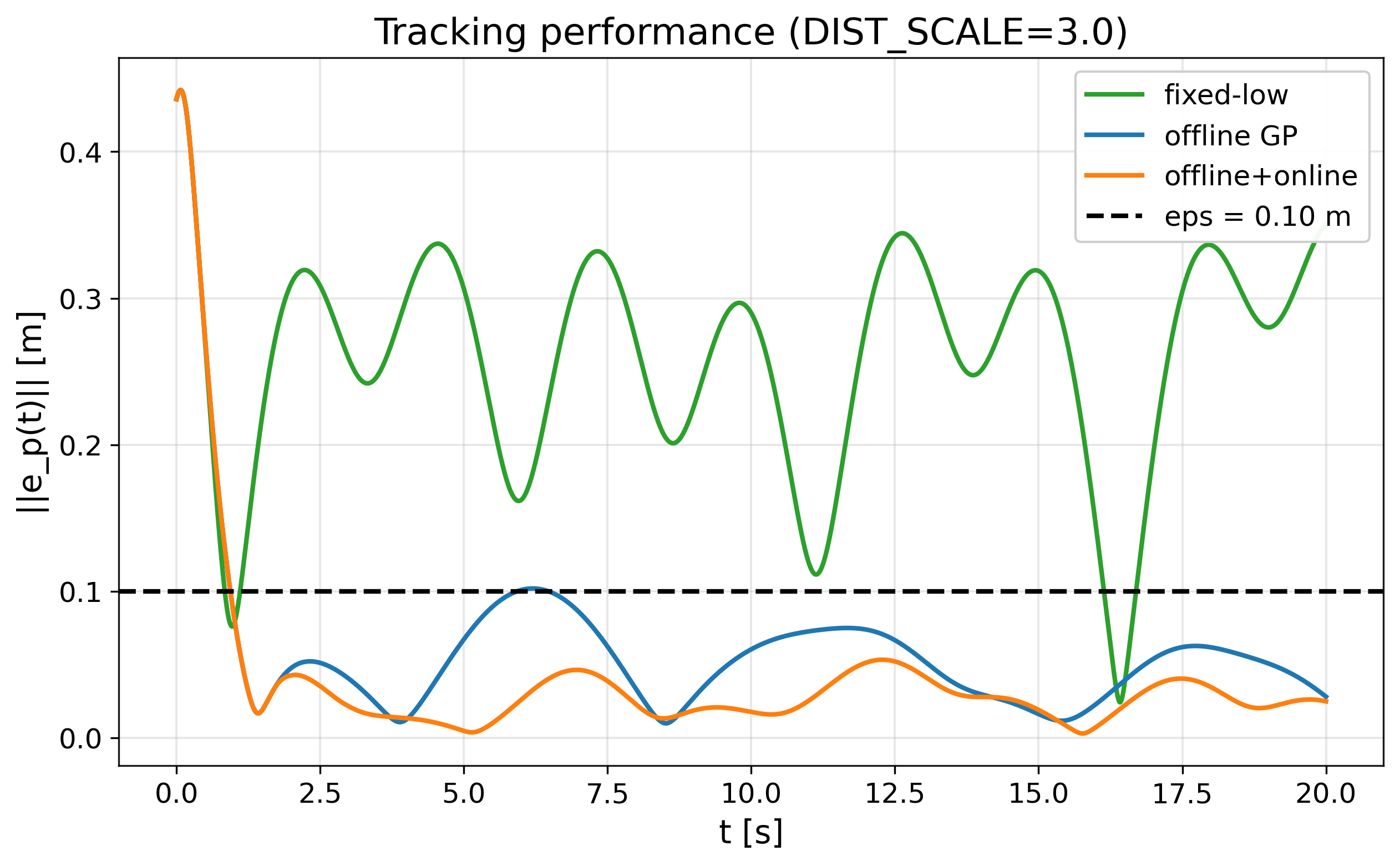}
  \caption{Online experiment, \texttt{DIST\_SCALE}$=3.0$. Position tracking error $\|e_p(t)\|$ for fixed-low (no GP), offline GP compensation, and offline+online residual GP. The dashed line denotes the tolerance $\varepsilon=0.10$\,m.}
  \label{fig:online_perf_dist3}
\end{figure}

Future work will focus on hardware experiments on a real quadrotor platform to validate the proposed aggressiveness-aware learning controller under realistic aerodynamic effects, sensing noise, and onboard computational constraints. We will investigate how to calibrate the uncertainty bounds and safety gates online from flight data so that prescribed tracking tolerances can be met with reduced control variation in practice under real-world conditions. The simulations presented here are intentionally designed to isolate the performance--aggressiveness mechanism; the experimental validation will include input saturation, discrete-time implementation, and state-estimation noise.



\bibliographystyle{IEEEtran}
\bibliography{root}

@inproceedings{lee2010geometric,
  author    = {Lee, Taeyoung and Leok, Melvin and McClamroch, N. Harris},
  title     = {Geometric Tracking Control of a Quadrotor UAV on {SE}(3)},
  booktitle = {Proceedings of the 49th IEEE Conference on Decision and Control (CDC)},
  year      = {2010},
  pages     = {5420--5425},
  address   = {Atlanta, GA, USA},
  publisher = {IEEE}
}

@article{beckers2022safe,
  title={Safe trajectory tracking for underactuated vehicles with partially unknown dynamics},
  author={Beckers, Thomas and Colombo, Leonardo J and Hirche, Sandra},
  journal={Journal of Geometric Mechanics},
  volume={14},
  number={4},
  pages={491--505},
  year={2022},
  publisher={Journal of Geometric Mechanics}
}

@inproceedings{Srinivas2010GPUCB,
  author    = {Niranjan Srinivas and Andreas Krause and Sham M. Kakade and Matthias Seeger},
  title     = {Gaussian Process Optimization in the Bandit Setting: No Regret and Experimental Design},
  booktitle = {Proceedings of the 27th International Conference on Machine Learning (ICML)},
  year      = {2010},
  pages     = {1015--1022}
}

@article{colombo2023learning,
    author={Colombo, Leonardo J. and Giribet, Juan I.},
  journal={IEEE Transactions on Control Systems Technology}, 
  title={Learning-Based Fault-Tolerant Control for an Hexarotor With Model Uncertainty}, 
  year={2024},
  volume={32},
  number={2},
  pages={672-679},
  doi={10.1109/TCST.2023.3318855}}

@INPROCEEDINGS{11312152,
  author={Beckers, Thomas and Colombo, Leonardo},
  booktitle={2025 IEEE 64th Conference on Decision and Control (CDC)}, 
  title={Physics-informed Learning for Passivity-based Tracking Control}, 
  year={2025},
  volume={},
  number={},
  pages={2091-2096},
  doi={10.1109/CDC57313.2025.11312152}}

@ARTICLE{11320442,
  author={Nieto, Omayra Yago and Simoes, Alexandre Anahory and Giribet, Juan I. and Colombo, Leonardo J.},
  journal={IEEE Transactions on Control of Network Systems}, 
  title={Learning-based Decentralized Control with Collision Avoidance for Multi-agent Systems}, 
  year={2025},
  volume={},
  number={},
  pages={1-12},
  doi={10.1109/TCNS.2025.3649723}}

@article{beckers2021online,
  title={Online learning-based trajectory tracking for underactuated vehicles with uncertain dynamics},
  author={Beckers, Thomas and Colombo, Leonardo J and Hirche, Sandra and Pappas, George J},
  journal={IEEE Control Systems Letters},
  volume={6},
  pages={2090--2095},
  year={2021},
  publisher={IEEE}
}

@article{nieto2024safe,
  title={Safe learning-based control for an aerial robot with manipulator arms},
  author={Yago Nieto, Omayra and Colombo, Leonardo J},
  journal={IFAC-PapersOnLine},
  volume={58},
  number={6},
  pages={36--41},
  year={2024},
  publisher={Elsevier}
}

@article{nieto2026dual,
  title={Dual-quaternion learning control for autonomous vehicle trajectory tracking with safety guarantees},
  author={Yago Nieto, Omayra and Anahory Simoes, Alexandre and Giribet, Juan I and Colombo, Leonardo},
  journal={arXiv preprint arXiv:2601.03097},
  year={2026}
}

@inproceedings{beckers2022learning,
  title={Learning-based Balancing of Model-based and Feedback Control for Second-order Mechanical Systems},
  author={Beckers, Thomas and Colombo, Leonardo J and Morari, Manfred and Pappas, George J},
  booktitle={2022 IEEE 61st Conference on Decision and Control (CDC)},
  pages={4667--4673},
  year={2022},
  organization={IEEE}
}

@article{lu2020robust,
  title={Robust autonomous flight in cluttered environment using a depth sensor},
  author={Lu, Liang and Yunda, Alexander and Carrio, Adrian and Campoy, Pascual},
  journal={International Journal of Micro Air Vehicles},
  volume={12},
  pages={1756829320924528},
  year={2020},
  publisher={SAGE Publications Sage UK: London, England}
}

@inproceedings{lopez2017aggressive,
  title={Aggressive 3-D collision avoidance for high-speed navigation.},
  author={Lopez, Brett Thomas and How, Jonathan P},
  booktitle={ICRA},
  pages={5759--5765},
  year={2017}
}

@article{BreedenGargPanagou2022CBFSampledData,
  author  = {Joseph Breeden and Kunal Garg and Dimitra Panagou},
  title   = {Control Barrier Functions in Sampled-Data Systems},
  journal = {IEEE Control Systems Letters},
  year    = {2022},
  volume  = {6},
  pages   = {367--372}
}

@ARTICLE{8291488,
  author={Franchi, Antonio and Carli, Ruggero and Bicego, Davide and Ryll, Markus},
  journal={IEEE Transactions on Robotics}, 
  title={Full-Pose Tracking Control for Aerial Robotic Systems With Laterally Bounded Input Force}, 
  year={2018},
  volume={34},
  number={2},
  pages={534-541},
  doi={10.1109/TRO.2017.2786734}}

@article{RashadBicegoZultSanchezEscalonillaJiaoFranchiStramigioli2022EnergyAwareImpedance,
  author  = {Ramy Rashad and Davide Bicego and Jelle Zult and Santiago Sanchez-Escalonilla and Ran Jiao and Antonio Franchi and Stefano Stramigioli},
  title   = {Energy Aware Impedance Control of a Flying End-Effector in the Port-Hamiltonian Framework},
  journal = {IEEE Transactions on Robotics},
  year    = {2022},
  volume  = {38},
  number  = {6}
}

@inproceedings{falanga2017aggressive,
  title={Aggressive quadrotor flight through narrow gaps with onboard sensing and computing using active vision},
  author={Falanga, Davide and Mueggler, Elias and Faessler, Matthias and Scaramuzza, Davide},
  booktitle={2017 IEEE international conference on robotics and automation (ICRA)},
  pages={5774--5781},
  year={2017},
  organization={IEEE}
}

@article{rodriguez2022autonomous,
  title={Autonomous aerial robot for high-speed search and intercept applications},
  author={Rodriguez-Ramos, Alejandro and Alvarez-Fernandez, Adrian and Bavle, Hriday and Rodriguez-Vazquez, Javier and Lu, Liang and Fernandez-Cortizas, Miguel and Fernandez, Ramon A Suarez and Rodelgo, Alberto and Santos, Carlos and Molina, Martin and others},
  journal={Field Robotics},
  volume={2},
  pages={1320--1350},
  year={2022},
  publisher={FRPS}
}

@article{DellaSantina2017SoftRobots,
  author  = {Cosimo Della Santina and Matteo Bianchi and Giorgio Grioli and Federico Angelini and Manuel G. Catalano and Manuel Garabini and Antonio Bicchi},
  title   = {Controlling Soft Robots: Balancing Feedback and Feedforward Elements},
  journal = {IEEE Robotics \& Automation Magazine},
  year    = {2017},
  volume  = {24},
  number  = {3},
  pages   = {75--83}
}

@article{Beckers2019StableGP,
  author  = {Thomas Beckers and Danijel Kuli{\'c} and Sandra Hirche},
  title   = {Stable Gaussian Process Based Tracking Control of Euler--Lagrange Systems},
  journal = {Automatica},
  year    = {2019},
  volume  = {103},
  pages   = {390--397}
}

@inproceedings{MellingerKumar2011MinSnap,
  author    = {Daniel Mellinger and Vijay Kumar},
  title     = {Minimum Snap Trajectory Generation and Control for Quadrotors},
  booktitle = {Proceedings of the IEEE International Conference on Robotics and Automation (ICRA)},
  year      = {2011},
  pages     = {2520--2525},
  publisher = {IEEE},
  doi       = {10.1109/ICRA.2011.5980409}
}

@article{MellingerKumar2012IJRR,
  author  = {Daniel Mellinger and Vijay Kumar},
  title   = {Trajectory Generation and Control for Precise Aggressive Maneuvers with Quadrotors},
  journal = {The International Journal of Robotics Research},
  year    = {2012},
  volume  = {31},
  number  = {5},
  pages   = {664--674},
  doi     = {10.1177/0278364911434236}
}

@inproceedings{Berkenkamp2015ECC,
  author    = {Felix Berkenkamp and Angela P. Schoellig},
  title     = {Safe and Robust Learning Control with Gaussian Processes},
  booktitle = {Proceedings of the European Control Conference (ECC)},
  year      = {2015}
}

@inproceedings{Berkenkamp2015SafeOpt,
  author    = {Felix Berkenkamp and Angela P. Schoellig and Andreas Krause},
  title     = {Safe Controller Optimization for Quadrotors with Gaussian Processes},
  booktitle = {Proceedings of the IEEE International Conference on Robotics and Automation},
  year      = {2016}
}

\end{document}